\documentclass[aps,pre,reprint,amsmath,amsfonts,amssymb]{revtex4-1}
\usepackage{setspace,graphics,rotating}
\usepackage{fullpage}
\usepackage{graphicx,color}
\usepackage{latexsym}
\usepackage{subfigure}
\usepackage{xspace}

\usepackage{algorithm}
\usepackage{algpseudocode}
\newcommand{\eat}[1]{}

\newtheorem{theorem}{Theorem}[section]
\newtheorem{lemma}{Lemma}[section]

\newtheorem{remark}{Remark}[section]
\newenvironment{proof}{\par \noindent {\bf Proof}:}{\qed \par}
\newcommand{\qed}{\nopagebreak \hfill $\Box$}

\newcommand{\beq}{\begin{equation}}
\newcommand{\eeq}{\end{equation}}
\newcommand{\baq}{\begin{eqnarray}}
\newcommand{\eaq}{\end{eqnarray}}
\newcommand{\baqm}{\begin{eqnarray*}}
\newcommand{\eaqm}{\end{eqnarray*}}
\newcommand{\barr}{\begin{array}}
\newcommand{\earr}{\end{array}}

\newcommand{\GeneSCs}{{\sc GeneSCs}\xspace}

\newif\ifabstract
\newif\iffull\fulltrue


\begin{document}

\title{Generative Models for Global Collaboration Relationships}

\author{Ertugrul Necdet Ciftcioglu}
\affiliation{IBM Research, Yorktown Heights, NY 10598}
\altaffiliation{e-mail: \{enciftci@us.ibm.com, ramanath@bbn.com, pbasu@bbn.com\}. Ciftcioglu was affiliated with Pennsylvania State University and hosted by Raytheon BBN Technologies when most of this research was conducted.}

\author{Ram Ramanathan}
\affiliation{Raytheon BBN Technologies, Cambridge, MA 02138}

\author{Prithwish Basu}
\affiliation{Raytheon BBN Technologies, Cambridge, MA 02138}

\makeatletter{\renewcommand*{\@makefnmark}{}
\date{\today}

\begin{abstract}

When individuals interact with each other and meaningfully contribute toward a common goal, it results in a \emph{collaboration}, as can be seen in many walks of life such as scientific research, motion picture production, or team sports. Each individual may participate in multiple collaborations at once or over time, resulting in a non-trivial collaboration structure. The \textit{artifacts} resulting from a collaboration (e.g. papers, movies) are best captured using a hypergraph model, whereas the \textit{relation} of \textit{who has collaborated with whom} is best captured via an \textit{abstract simplicial complex} (SC). 

In this paper, we propose a generative algorithm \GeneSCs for SCs modeling fundamental collaboration relations, primarily based on preferential attachment. The proposed network growth process favors attachment that is preferential not to an individual's {\em degree}, i.e., how many people has he/she collaborated with, but to his/her {\em facet degree}, i.e., how many maximal groups or {\em facets} has he/she collaborated within. Unlike graphs, where a node's degree can capture its first order local connectivity properties, in SCs, both facet degrees (of nodes) and \emph{facet sizes} are important to capture connectivity properties. Based on our observation that several real-world facet size distributions have significant deviation from power law---mainly due to the fact that larger facets tend to \emph{subsume} smaller ones---we adopt a data-driven approach. We seed \GeneSCs with a facet size distribution informed by collaboration network data and randomly grow the SC facet-by-facet to generate a final SC whose \emph{facet degree distribution} matches real data. We prove that the facet degree distribution yielded by \GeneSCs is power law distributed for large SCs and show that it is in agreement with real world co-authorship data. Finally, based on our intuition of collaboration formation in domains such as collaborative scientific experiments and movie production, we propose two variants of \GeneSCs based on \emph{clamped} and \emph{hybrid} preferential attachment schemes, and show that they perform well in these domains.

\end{abstract}

\eat{ 
We attempt to understand the fundamental characteristics of the underlying collaboration structures that exist in several fields such as sciences and movie production. The underlying collaboration can be represented using a tool from algebraic topology, namely, a simplicial complex. A simplicial complex captures the basic nature of the collaboration instead of the artifacts that have been produced by such a collaboration, i.e., papers or movies. We propose generative growth models based on preferential attachment, not with an individual X's {\em degree} (how many people has X collaborated with over time)but with X's {\em facet degree} (how many maximal groups or {\em facets} within which has X collaborated over time). Previously proposed {\em structure-based} growth models~\cite{PDFV2005,HAMND2011,HAMND2012} do not focus on modeling the underlying structure of collaboration -- instead they model the artifacts of collaboration, which are often observed to be power-law distributed in size. We, however, observe that the sizes of maximal collaboration (or facets) are often far from being power-law distributed even though the facet degree distribution {\em is} power-law or power-law with an exponential tail, and hence propose a facet-based generative model that takes as input the size distribution of facets and applies grows the collaboration network one facet at a time. During this process a large facet may {\em subsume} a smaller existing facet since we are interested in capturing the fundamental structure of collaboration. We demonstrate that subsumption is quite prominent in real datasets, particularly in publications of communities as experimental physics, differentiating facet statistics from hyperedge statistics. Adapting classic techniques ~\cite{DMS2000}, we prove that the facet degree distribution of generated simplicial complexes is power-law distributed. We use the (pre-subsumed) input data for our algorithms, and demonstrate that further subsumption is provably infrequent for large complexes, which implies that the generative model accurately preserves the actual facet size distributions. We also consider variants of the generative algorithms developed based on our intuition of real world collaboration forming, which perform better for some datasets. We also show using empirical statistical analysis that the generated simplicial complexes have low Kolmogorov-Smirnov and Total-Variation distances from the real data.

Our contributions in this paper include: (a) Systematic characterization of the nature of subsumption in real world collaboration networks; (b) \GeneSCs, an efficient generative algorithm to generate realistic simplicial complexes given only their facet size distributions; (c) Proof that the facet degree distribution of generated simplicial complexes is power-law distributed with an exponent $\alpha = 2 + \frac{1}{c\:s-1}$, where $c$ is the average facet density and $s$ is the average facet size; (d) Validation using empirical statistical analysis that the generated simplicial complexes have low Kolmogorov-Smirnov and Total-Variation distances from the real data; and (e) Adaptations of the core facet-based preferential attachment kernel in \GeneSCs to model some observed variations in real world collaboration structures.

}

\maketitle


\section{Introduction}
\label{sec:intro}

\eat{
\textcolor{red}{Need to edit. Points to drive home:
\begin{itemize}
\item Is the hyperedge size distribution power law? Facet size distribution is not so, apparently. Show some plots upfront to motivate.
\item Therefore SPA type approaches \cite{HAMND2011} may not be sufficient since they are interested in hyperedge sizes and hypergraph degree
\item Also, motivate why studying maximal collaboration degree (aka facet degree) is interesting (has this been motivated in a previous paper by Minh?)
\end{itemize}
}
}

Many large endeavors in society such as scientific discoveries and production of motion pictures are a result of collaboration. Typically, individuals collaborate to form teams, for example, a scientific paper is written jointly by a team of researchers. Also, smaller teams can collaborate to form larger groups. Examples of the latter include a movie production house containing teams of artists, directors, and crew; a disaster relief mission requiring interactions between teams of medical rescue workers, fire-fighters, and law enforcement officials with some common agents serving as gateways; and a major scientific discovery happening with the coming together of research over a series of papers, which typically have some common authors. The main goal of this paper is to understand the fundamental characteristics of the underlying global collaboration structures that exist in collaborative fields such as scientific research and movie production.

In modeling collaboration structures, the basic collaborative unit could either be the \textit{relation} underlying the collaboration or the \textit{output} or \textit{artifact} from the collaboration (e.g. paper or movie). The difference in the resulting structure is best illustrated with a simple example. Suppose authors \textit{a}, \textit{b}, \textit{c}, and \textit{d} write three papers with authorships (a,b,c), (a,b), (c,d). Then a structure based on the collaboration artifact is identical to the set of papers, whereas one based on the collaboration relation is (a,b,c), (c,d). In other words, the collaboration relation structure ignores (a,b) since (a,b,c) already captures the fact that any subset of it, in particular (a,b), has collaborated. Previous studies of collaboration networks have overwhelmingly focused on the artifact-based structure~\cite{colls,PDFV2005,npacol,distcol,slov,HAMND2011,HAMND2012,Liu2012,clqc}. The relation-based structure, is instead able to capture the \textit{social} aspects of collaboration, which is interesting in its own right. 

\emph{Hypergraphs} (HG) are suitable for expressing ``richer than pairwise" relationships between collaborators~\cite{PDFV2005,HAMND2011}; however, we believe they best model the \emph{artifacts} of the collaboration -- each by a hyperedge. The collaboration \textit{relation} on the other hand is closed under the subset operation. A perfect match for succinctly capturing such a property is the \emph{abstract simplicial complexes} (SC), which in simplest terms is a collection of sets closed under the subset operation.

The primary distinction between HGs and SCs is that in the case of the latter, a ``simplex'' of dimension $k$ (modeling a $(k+1)$-ary collaboration relation) \emph{subsumes} all subset simplices of dimension $k-1$, and so on recursively. Consequently, if an HG is used to model a collaboration \textit{relation}, the distributions of sizes and degrees turn out to be non-trivially skewed compared to the relation structure itself, or equivalently, the SC representation thereof. 

In this paper, we propose a new generative algorithm \GeneSCs for SCs that models the fundamental relations underlying large-scale collaboration. \GeneSCs is primarily based on preferential attachment --- not with an individual's {\em degree} (how many people has he/she collaborated with) but rather with his/her {\em facet degree} (how many maximal groups or {\em facets} has he/she collaborated within). Unlike graphs, where a node's degree can capture its first order local connectivity properties, in SCs, there are two key metrics to consider --- a node's facet degree and a facet's size. Based upon our observation that several real-world \emph{facet size} distributions have significant deviation from power law---predominantly due to the fact that larger facets tend to \emph{subsume} smaller ones---we adopt a data-driven approach. We ``seed" \GeneSCs with a facet size distribution (from input data) and grow the SC randomly facet-by-facet to generate a final SC with a \emph{facet degree distribution} that matches real data. 

Note that we sample the facet size distribution as input instead of the artifact (hyperedge) size distribution since we want to generate the underlying SC, and not the Hypergraph, and there may be several discordant facet size distributions resulting from a single hyperedge size distribution based on how the collaboration relation is structured.



Our key contributions are summarized below:
\begin{enumerate}
\item Systematic characterization of the nature of subsumption in real world global collaboration networks. (Section \ref{sec:subsump})
\item An efficient generative algorithm \GeneSCs to generate realistic SCs with matching facet degree distributions, given only their facet size distributions. (Section \ref{sec:genescs})
\item An analytic proof that the facet degree distribution of generated SCs is power law distributed, matching empirical studies, with an exponent $\alpha = 2 + \frac{1}{c\:s-1}$, where $c$ is the average facet density (average number of facets per node) and $s$ is the average facet size. (Theorem \ref{thm:fdeg})
\item Validation using empirical statistical analysis that the generated SCs have low Kolmogorov-Smirnov and Total-Variation distances from the real data. (Section \ref{sec:eval})
\item Adaptations of the core facet-based preferential attachment kernel in \GeneSCs to model some observed variations in real world collaboration structures. (Section \ref{sec:smoothPA})
\end{enumerate}

Interestingly, we demonstrate (and give analytical justification for the fact) that when \GeneSCs generates facets one after another with their sizes randomly drawn from the facet size distribution of the target real data set given as input, the probability of occurrence of subsumptions during this random growth process is negligible. This does not contradict our observation that subsumption phenomena is common in real collaboration artifact data. In reality, subsumptions occur over sequentially added hyperedges, whereas in GeneSCs, we already start with the pre-subsumed facet-based representation, hence further distortion is not required. This feature of GeneSCs is a valuable benefit by virtue of using facets as opposed to hyperedges in the sampling process.



\section{Modeling Global Collaboration Relationships}

Standard \emph{graphs} are insufficient to capture group phenomena since they only model \emph{binary} relations between individuals. A generalization of graphs, namely, the \emph{hypergraph} has been proposed to address this shortcoming~\cite{PDFV2005,HAMND2011}. A hypergraph $H=(G,E)$ comprises a set of nodes $V$ and {\em hyper-edges} $E \subseteq 2^V$ to model higher order (or \emph{super-binary}) relations. 

Insights about the structure of a large collaboration network can be drawn by examining its ``artifacts", i.e., papers, movies, etc., and the underlying distributions of \emph{hyperedge size} (the number of nodes belonging to a hyperedge, e.g., number of co-authors in a paper) and \emph{hyper-degree} (the hyper-degree of a node is the number of hyperedges it belongs to, e.g., number of movies an actor has acted in). 

We believe that equally interesting insights can emerge from an understanding of the collaboration relation which focuses on the social aspects of the collaboration, namely the set of collaborating individuals.  Such a question can be answered by examining the underlying higher-order global collaboration structure, which is not concerned about the specific products of the collaboration. For example, if $a$, $b$, and $c$ have collaborated as a group $(a,b,c)$, then sparser collaboration relationships $(a,b)$ or $(b,c)$, even if they occurred, do not add much value if our goal is to understand the number of maximal groups that a person has collaborated in. 

Such information is indeed buried in the ``collaboration artifact network", i.e., the hypergraph, but typically, statistical properties of the higher-order global collaboration structure cannot be trivially determined from those of the hypergraph. In the worst case, the representation complexity of hypergraphs grows exponentially, since $k$ collaborating individuals can build as many as $2^k - 1$ different artifacts. Since we are only interested in the fact that these $k$ individuals collaborated on at least one project, the artifact network may be too unwieldy for analysis.

\begin{figure}[htbp]
  \centering
    \includegraphics[width=1.0\columnwidth]{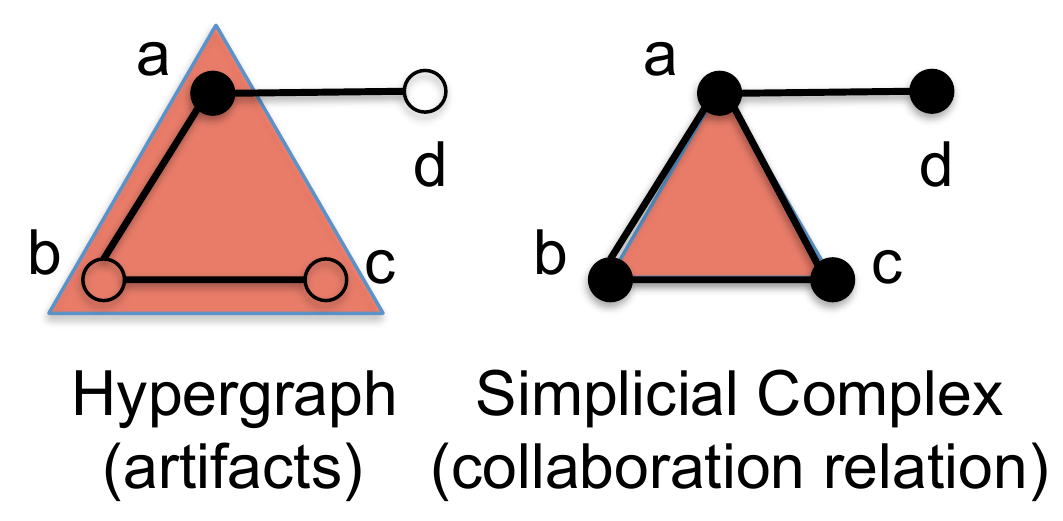}
  \caption{\label{fig:HGvsSC} Hypergraph vs. Simplicial Complex: In the hypergraph, a dark node denotes an author who has written least one paper as a sole author. In a simplicial complex, all nodes are dark since being dark just means than the corresponding author has written at least one paper, with zero or more collaborators ($0$-simplex). In this example, each $0$-simplex belongs to a $1$-simplex (edge) and in case of $a$, $b$, and $c$, also a $2$-simplex.}
\end{figure}

\subsection{Abstract Simplicial Complexes} 
The basic structure of the underlying collaboration can be modeled by an \emph{abstract simplicial complex (SC)}. A set-system $SC$ of non-empty finite subsets of a universal set $S$ is an abstract simplicial complex if for every set $X \in SC$, and every non-empty subset $Y \subset X$, $Y \in SC$ -- thus the set-system is ``closed" under subset operation~\cite{Hatcher2002}. Therefore, a simplicial complex captures the basic nature of the collaboration, i.e., who all have worked together on common tasks, instead of the artifacts that have been produced by such a collaboration, i.e., papers or movies. 

Consider the example in Figure \ref{fig:HGvsSC} -- suppose $a$, $b$, $c$, and $d$ are four authors who have co-authored five papers among them including single-author papers -- this co-authorship is denoted by a hypergraph consisting of five hyperedges: $H = \{(a,b), (a,b,c), (b,c), (a,d), (a)\}$. The collaboration structure would be represented by simplicial complex with facets $\{(a,b,c),(a,d)\}$ which {\em subsumes} the other three simplexes because if $a$,$b$, and $c$ collaborate with each other, all subsets of them do so as well. While $a$ and $c$ have not explicitly collaborated {\em separately}, they have collaborated with each other in presence of $b$ in the paper denoted by $(a,b,c)$. Essentially, a simplicial complex consists of a set of maximal simplexes or {\em facets}. 

In previous work, we have shown how simplicial complexes can be effectively used to model collaboration networks~\cite{Ramanathan2011,HoangRMS13,HoangRS14}. For a collaborative group denoted by a facet, two basic metrics are \emph{facet size} (how many people belong to that collaboration) and a node's \emph{facet degree} (how many maximal collaborations or facets does that node belong to).

\subsection{Modeling Subsumptions}
\label{sec:subsump}

The basic difference between hypergraphs and simplicial complexes can be explained by the phenomenon of \emph{subsumptions}. In the previous example, hyperedges $(a,b)$ and $(b,c)$ get \emph{subsumed} by the largest hyperedge $(a,b,c)$, which is a \emph{facet} in SC. Similarly, facet $(a,d)$ subsumes $(a)$.

In theory, subsumptions can be very pronounced. Consider a large research project with $N$ participating faculty members. Consider the situation where each faculty member has one single-author paper, one paper written with one of the other faculty members, one paper written with two \eat{other} distinct faculty, and so on. Finally, assume that all of these $N$ authors collaborate to write a joint paper together. Clearly, there are $2^N-1$ distinct hyperedges in total -- $N$ single author papers, $\binom{N}{2}$ two-author papers, and $\binom{N}{k}$ $k$-author papers, in general. Since each node has exactly $2^{N-1}$ hyperedges, the hyperedge degree distribution is given by the impulse function $\delta(2^{N-1})$. On the other hand, hyperedge size distribution is a non-monotonic function which is proportional to $\frac{N!}{(N-k)!k!}$, centered around $\frac{N}{2}$. In contrast, in the simplicial complex representation, the largest collaboration is the only \emph{facet}, so all nodes have facet degree $1$, hence the facet degree distribution is $\delta(1)$. Moreover, the only facet has size $N$, hence the facet size distribution is $\delta(N)$ (See Fig. \ref{fig:extremeSubs}). There is significant discrepancy between the two distributions due to the intense degree of subsumption -- since all faculty members collaborate with each other on one paper, the other smaller collaborations are directly implied by the former. Note that the deviation would be larger if there were multiple papers with exactly the same authors, since the hyperedge count would increase without affecting facet statistics.

\begin{figure}[htbp]
\centering \centering
\vspace{-0.15in}
\includegraphics[width=1.0\columnwidth, ]{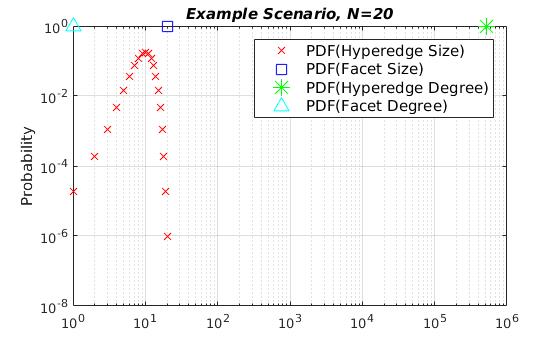}\vspace{-0.2in}
\caption{\label{fig:extremeSubs} A hypothetical case with extreme subsumptions}\vspace{-0.1in}
\end{figure}

The above example demonstrated a case where many different hypergraph instances associated with $N$ nodes may map to only one simplicial complex, i.e. a many-to-one mapping between the set of different hypergraphs to one simplicial complex representation. Yet, one may think that once a specific hypergraph is given, the corresponding simplicial complex can be simply obtained from it by following the subset closure operation. However, in reality, while working with datasets, one typically expects only the distributional statistics to be given. We next demonstrate that starting from a given hyperedge size distribution one might end up with multiple simplicial complex representations with drastically different facet size distributions even for a fixed number of nodes. In fact, the total variation distance $D_{TV}$ (a commonly used distance metric to compare two different distributions, i.e., normalized statistical distance between two distributions which measures sum of differences over the support set, formally defined in Section \ref{sec:eval}) among the various feasible non-isomorphic simplicial complexes might approach $1$, which is the maximum value that can be defined between two distributions, as $N\rightarrow \infty$.

Consider a given hyperedge size distribution $f_h(.)$. Let us denote the maximum hyperedge size in $f_h(.)$ by $H$, and assume that $N=2H-1$, where again $N=|V|$ denotes the number of nodes.
\eat{Next,  let the hyperedge size distribution be such that there are $2H-1$ size one hyperedges, one size $H-1$ hyperedge and one size $H$ hyperedge on an average when one samples $2H+1$ hyperedges;} 
Next, consider a hypergraph consisting of a total of $2H+1$ hyperedges, with $2H-1$ hyperedges of size one, and one hyperedge each of size $H-1$ and $H$. That is, $f_h(1)=\frac{2H-1}{2H+1}, f_h(H-1)=\frac{1}{2H+1}$ and $f_h(H)=\frac{1}{2H+1}$. A small scale illustration of this scenario with $N=5$ and $H=3$ is given in Figure \ref{fig:HGmulSC}.
Given this distribution, it may be possible that the two large hyperedges are disjoint and do not possess any nodes in common, hence in the SC representation there is one facet with size $H-1$ and one facet with size $H$, together spanning all $N=2H-1$ of the nodes. Accordingly, all the smaller (single node) hyperedges are subsumed by the two larger facets since every node belongs to a larger collaboration (e.g. $SC_1$ in Figure \ref{fig:HGmulSC}). On the other hand, it could also be the case that the larger hyperedge of size $H$ subsumes the one of size $H-1$. Then, $H-1$ of the $2H-1$ hyperedges of size one do not belong to the larger facets, and hence are disjoint. Overall, there are $H-1$ size one facets and one size $H$ facet in the simplicial complex representation. (e.g. $SC_2$ in Figure \ref{fig:HGmulSC}).

\begin{figure}[htbp]
  \centering \vspace{-0.2in}
    \includegraphics[width=1.0\columnwidth]{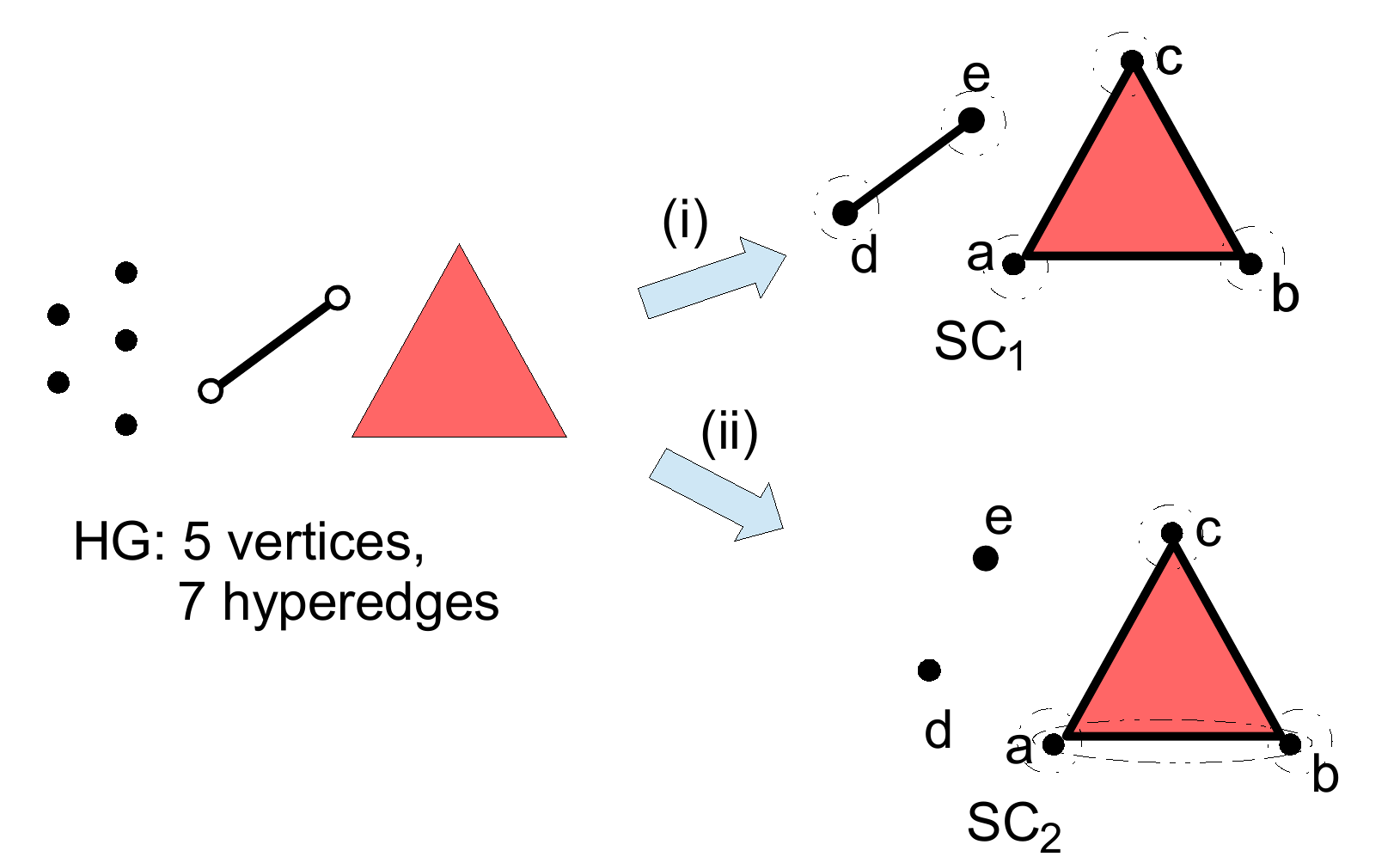}\vspace{-0.15in}
  \caption{\label{fig:HGmulSC} Two of the possible different Simplicial Complexes and the corresponding facet size distributions $z(s)$, resulting from a given Hyperedge size distribution $f_h(s)\approx(0.714, 0.143, 0.143)$ with $N=5$ nodes: (i) $SC_1 = \{(a,b,c), (d,e)\}$, $z_{SC_1}(s)=(0,0.5,0.5)$ (ii) $SC_2 = \{(a,b,c), (d), (e)\}$, $z_{SC_2}(s)=(0.\bar{6}, 0,0.\bar{3})$. Total variation distance ($D_{TV}$) between the two facet size distributions: $0.\bar{6}$ The dashed contours depict hyperedges which were subsumed in the facet representations.}\vspace{-0.1in}
\end{figure}
\begin{figure*}[htbp]
  \centering
    \includegraphics[width=0.48\textwidth]{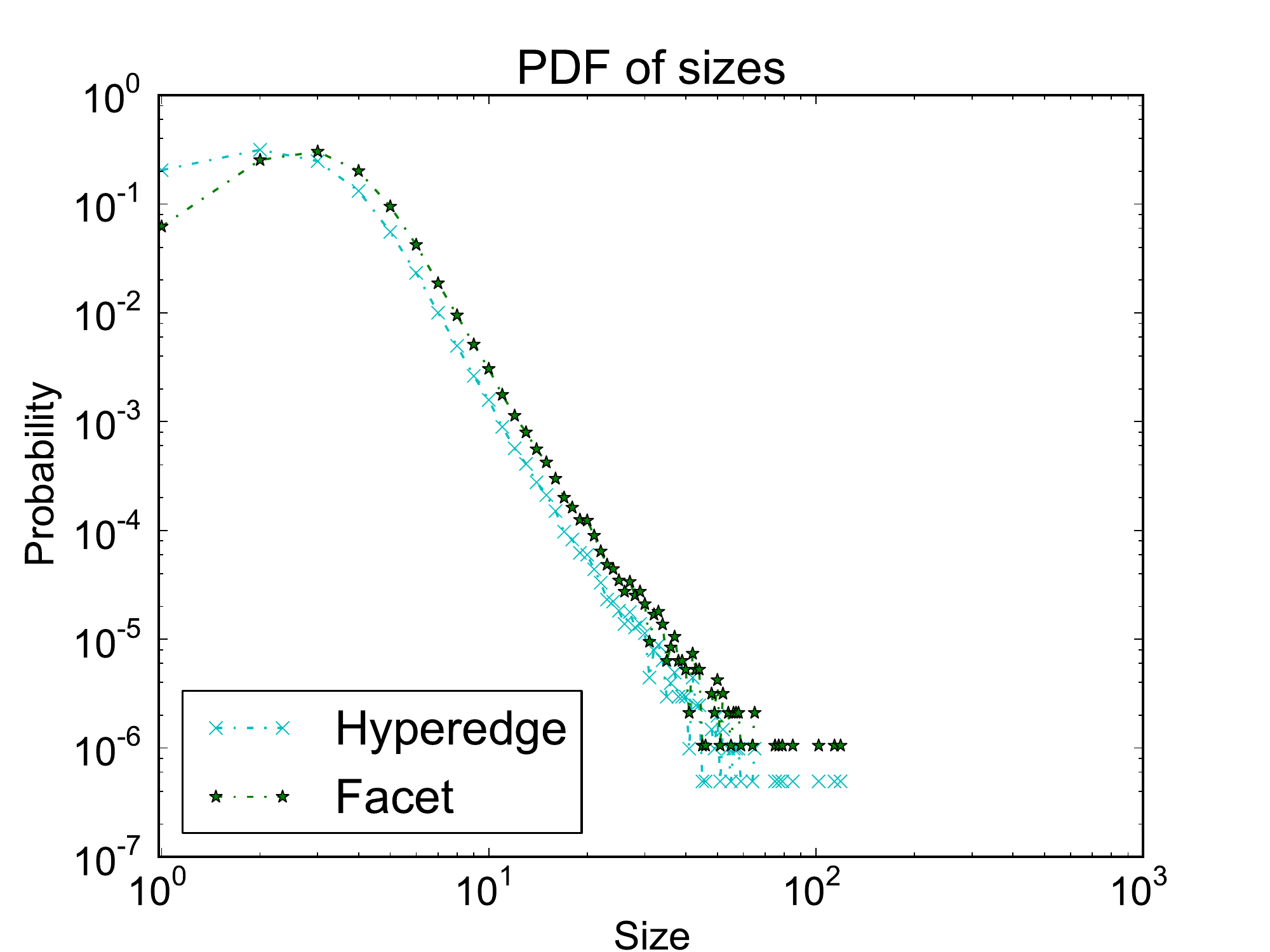}
    \includegraphics[width=0.48\textwidth]{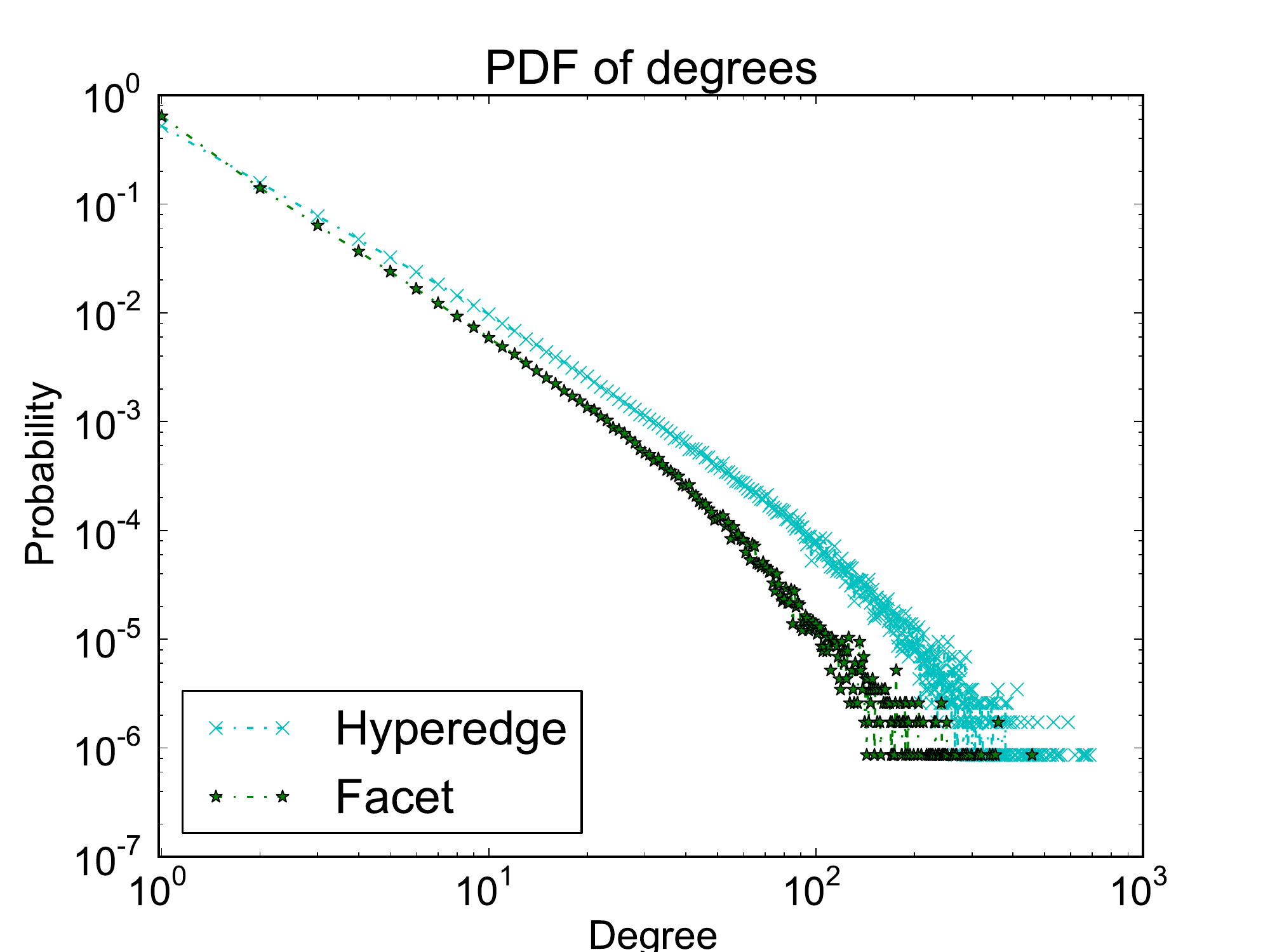}
  \caption{\label{fig:DBLP-HGvsSC} Hypergraph vs. Simplicial Complex metrics for DBLP: (a) size and (b) degree}
\end{figure*}
\begin{figure*}[t!]
  \centering
    \includegraphics[width=0.48\textwidth]{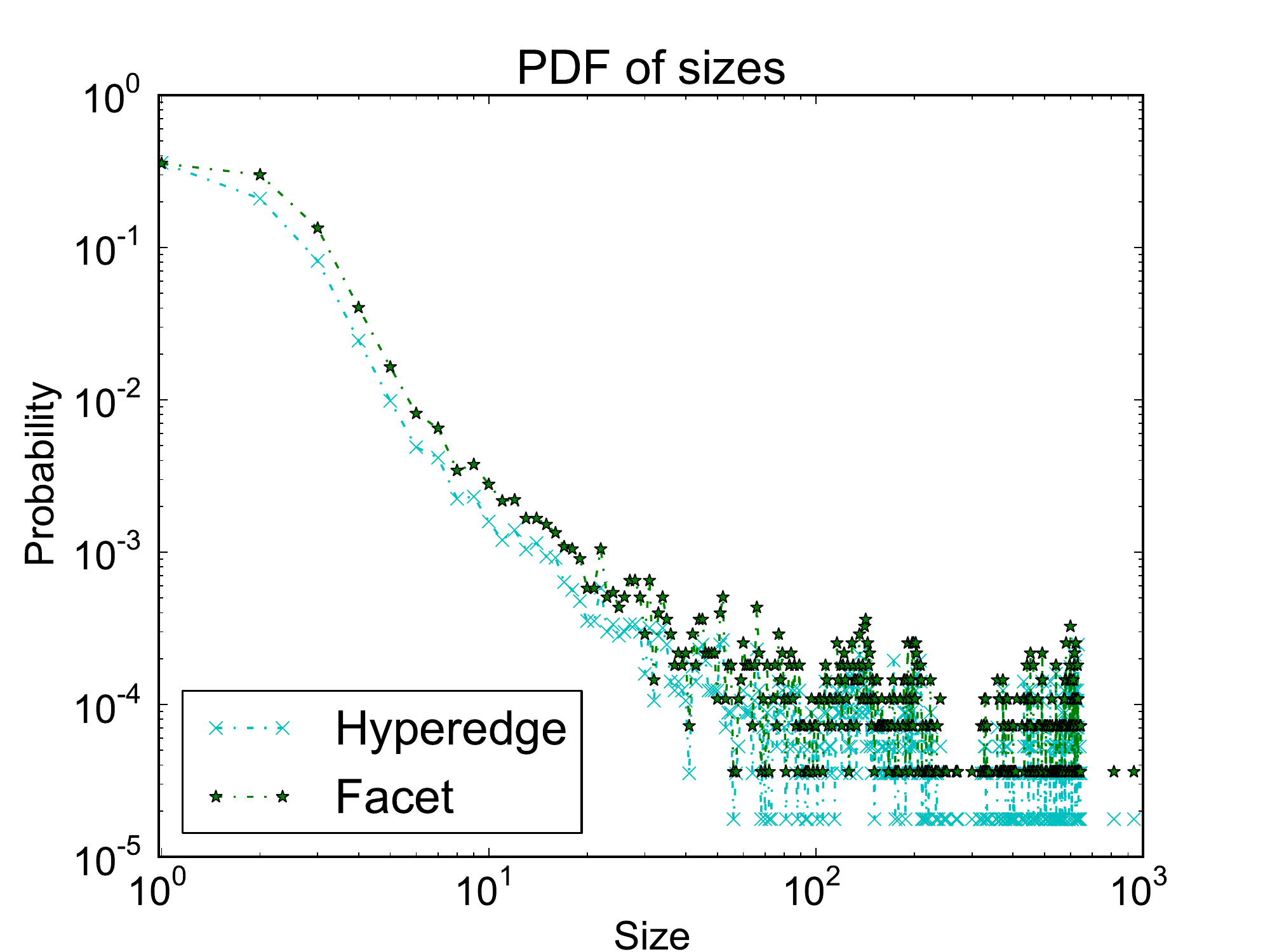}
    \includegraphics[width=0.48\textwidth]{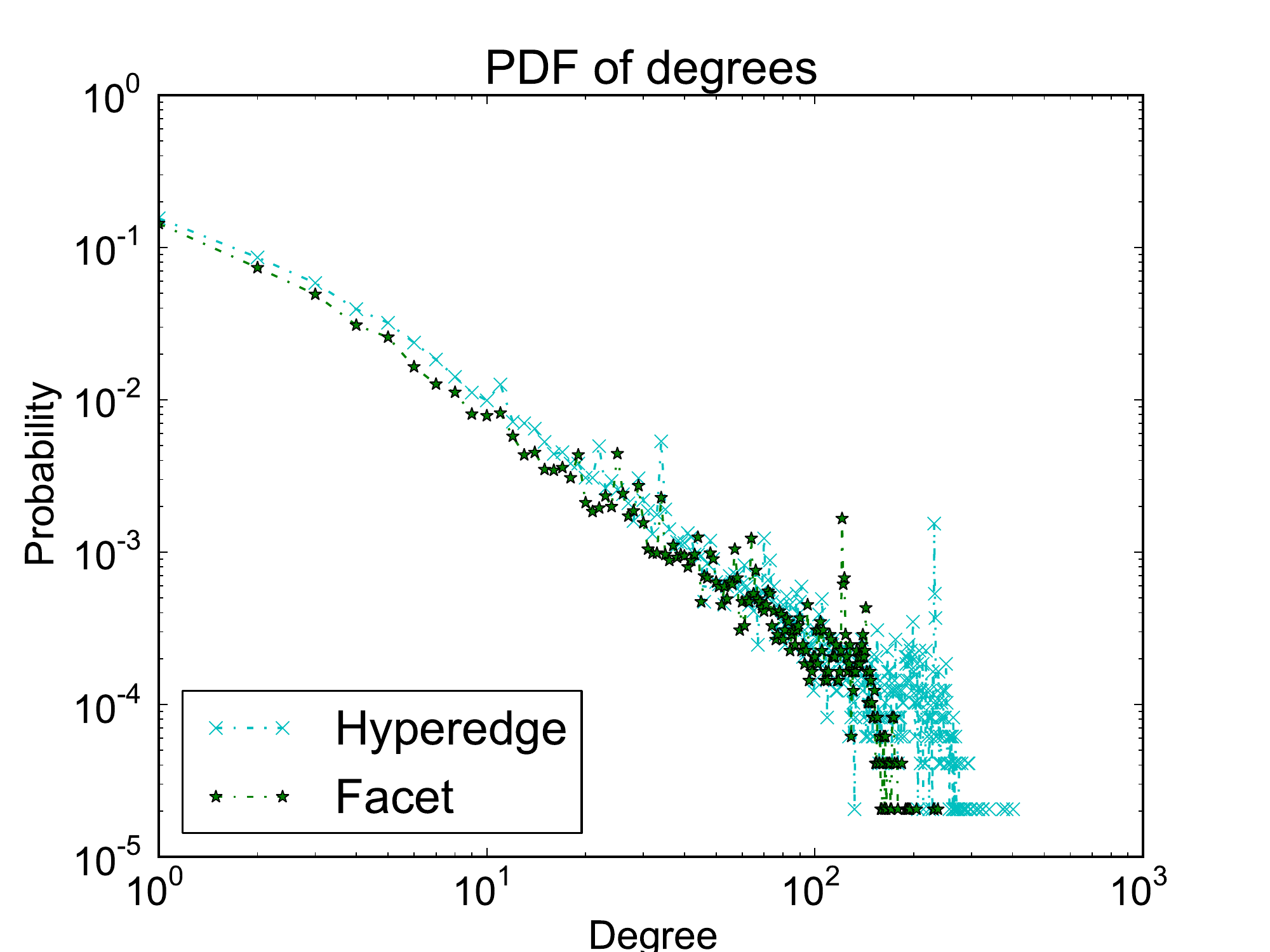}
  \caption{\label{fig:PRD-HGvsSC}Hypergraph vs. Simplicial Complex metrics for Physical Review D: (a) size and (b) degree}
\end{figure*}

It can be observed from Figure \ref{fig:HGmulSC} that the facet size distribution of $SC_1$ is an $H$-dimensional vector $(0,\ldots,0.5,0.5)$ with non-zero entries at $H-1$ and $H$; and the facet size distribution of $SC_2$ is $(\frac{H-1}{H},0,\ldots,\frac{1}{H})$ with $f_S(H)=\frac{1}{H}$. The total variations (please see Section \ref{sec:eval} for a formal definition) between two distributions are $0.5(\frac{H-1}{H}+ 0.5+ |0.5-\frac{1}{H}|)$, which converges to $1$ as $N$ (and hence $H$) grows. Even if the condition $N=2H-1$, i.e., $H\approx N/2$ may be found to be restricting in the sense that $H$ may not grow as much, alternative examples and expressions can be constructed.

For example, with $N= 3H-2$, where the maximum hyperedge is of size $H\approx N/3$, assume we have a hypergraph on $N$ nodes which consists of one hyperedge of size $H$, two hyperedges with size $H-1$, and $3H-2$ with size one hyperedge. This corresponds to the hyperedge size distribution $f_h(1)=\frac{3H-2}{3H+1}, f_h(H-1)=\frac{2}{3H+1}$ and $f_h(H)=\frac{1}{3H+1}$. Then, if all large facets are disjoint they cover all nodes and one has a facet size distribution of $(0,0,....0.\bar{6},0.\bar{3})$, whereas if the two facets of size $H-1$ differ by only one node and are both subsumed by the one of size $H$; $2H-2$ of the singleton nodes remain disjoint, and the facet size distribution for this scenario is $(\frac{2H-2}{2H-1},0,...,\frac{1}{2H-1})$ and the total variation would still approach $1$ for large enough $H$.

The above examples clearly demonstrate that if one in interested in understanding global collaboration relationship structures represented as simplicial complexes, starting from hyperedge size statistics to generate SCs may result in wildly discordant structures. Hence, in our generative algorithm \GeneSCs (in Section \ref{sec:gengrowth}) we take the facet size distribution as input, instead.

\begin{figure*}[t!]
\centering
\includegraphics[width=0.32\textwidth]{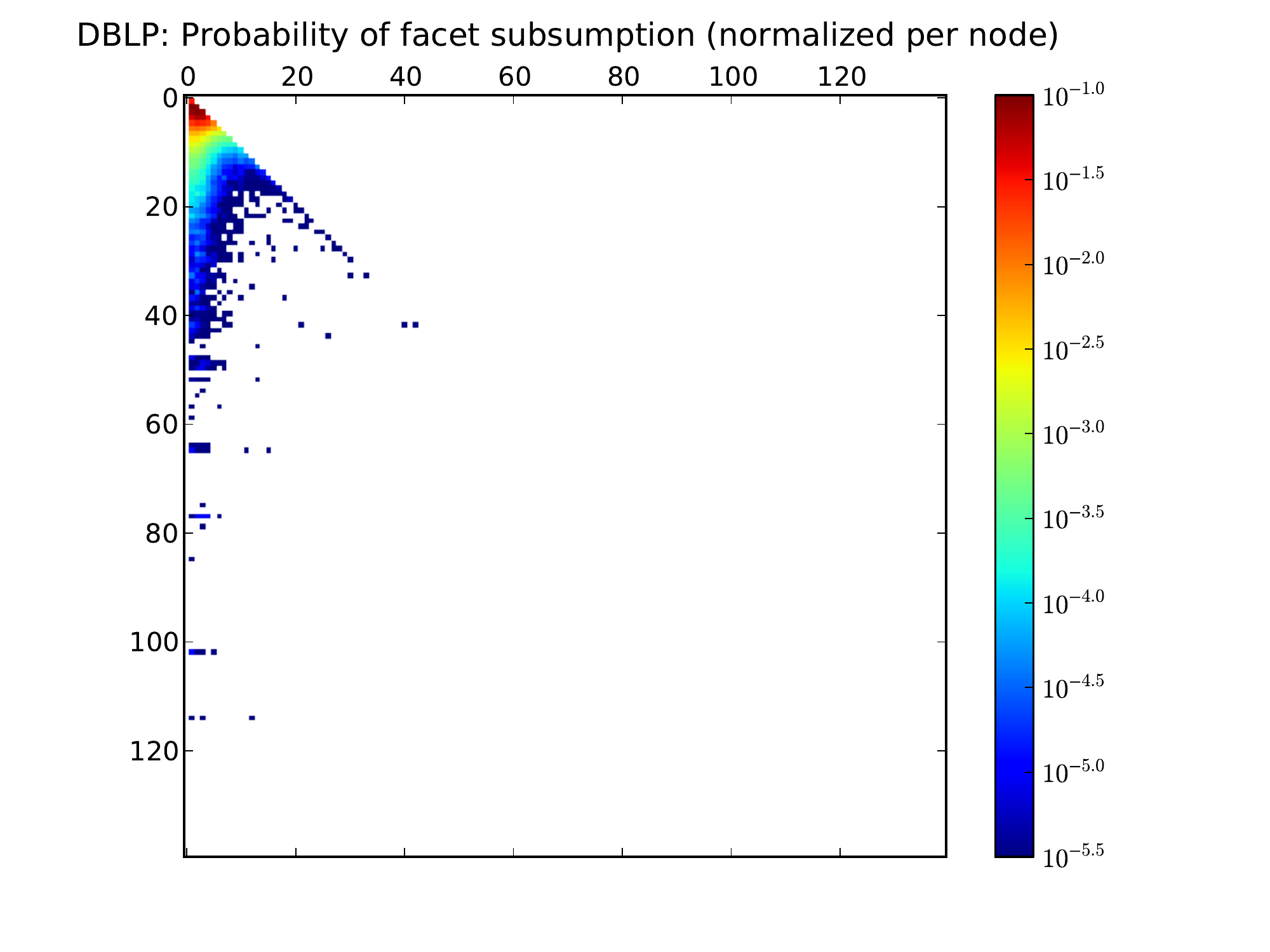}
\includegraphics[width=0.32\textwidth]{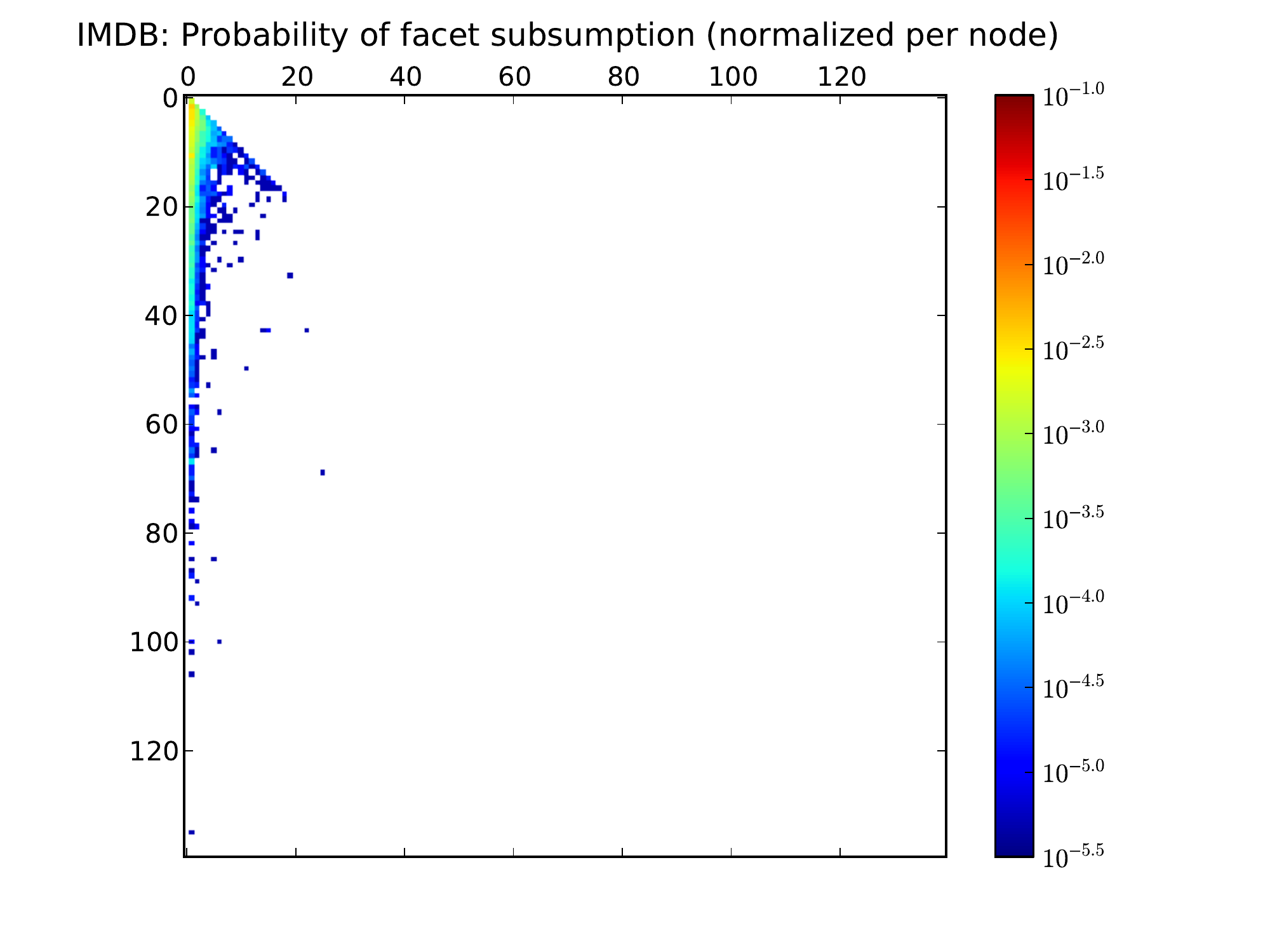}
\includegraphics[width=0.32\textwidth]{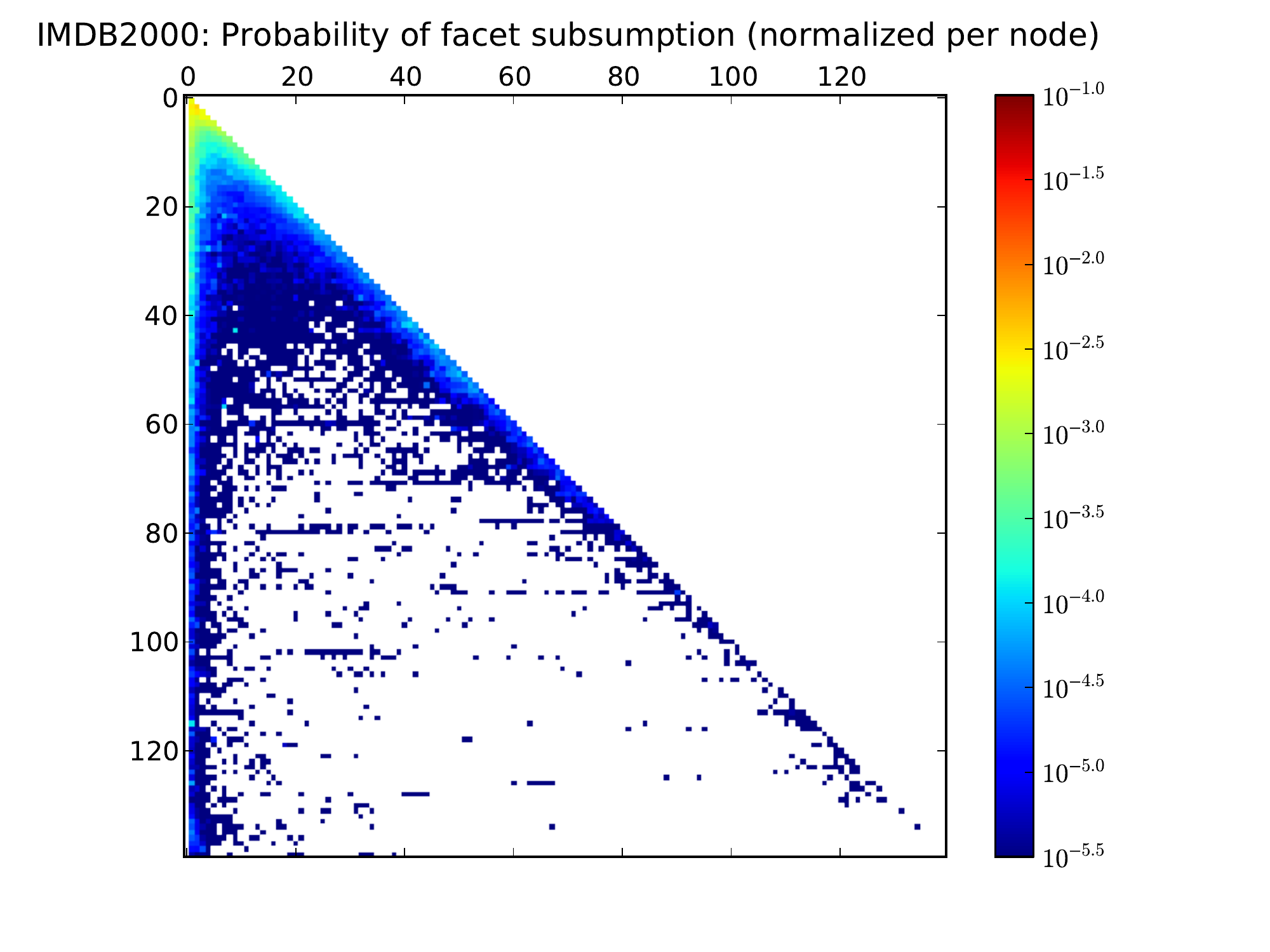}
\includegraphics[width=0.32\textwidth]{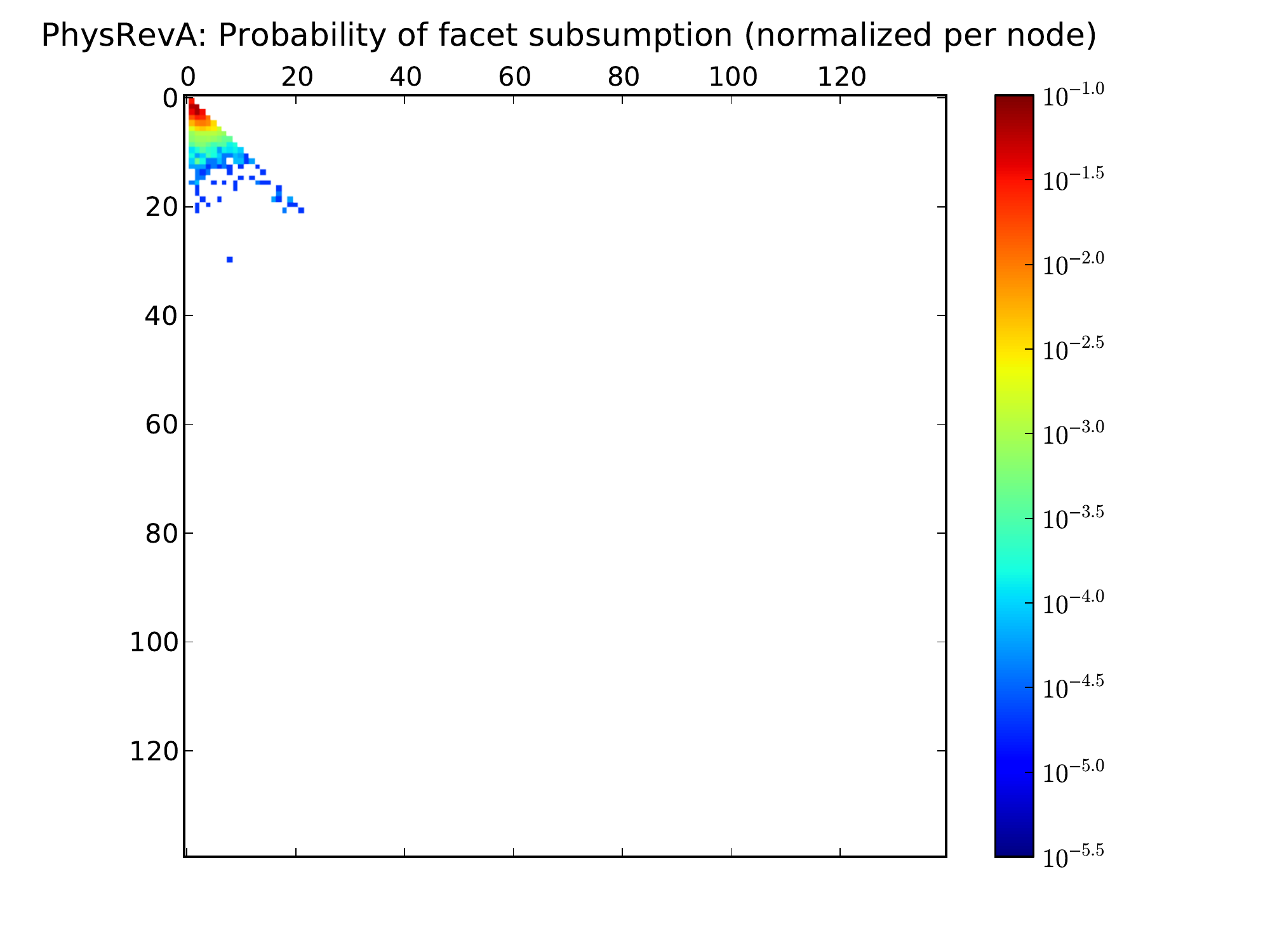}
\includegraphics[width=0.32\textwidth]{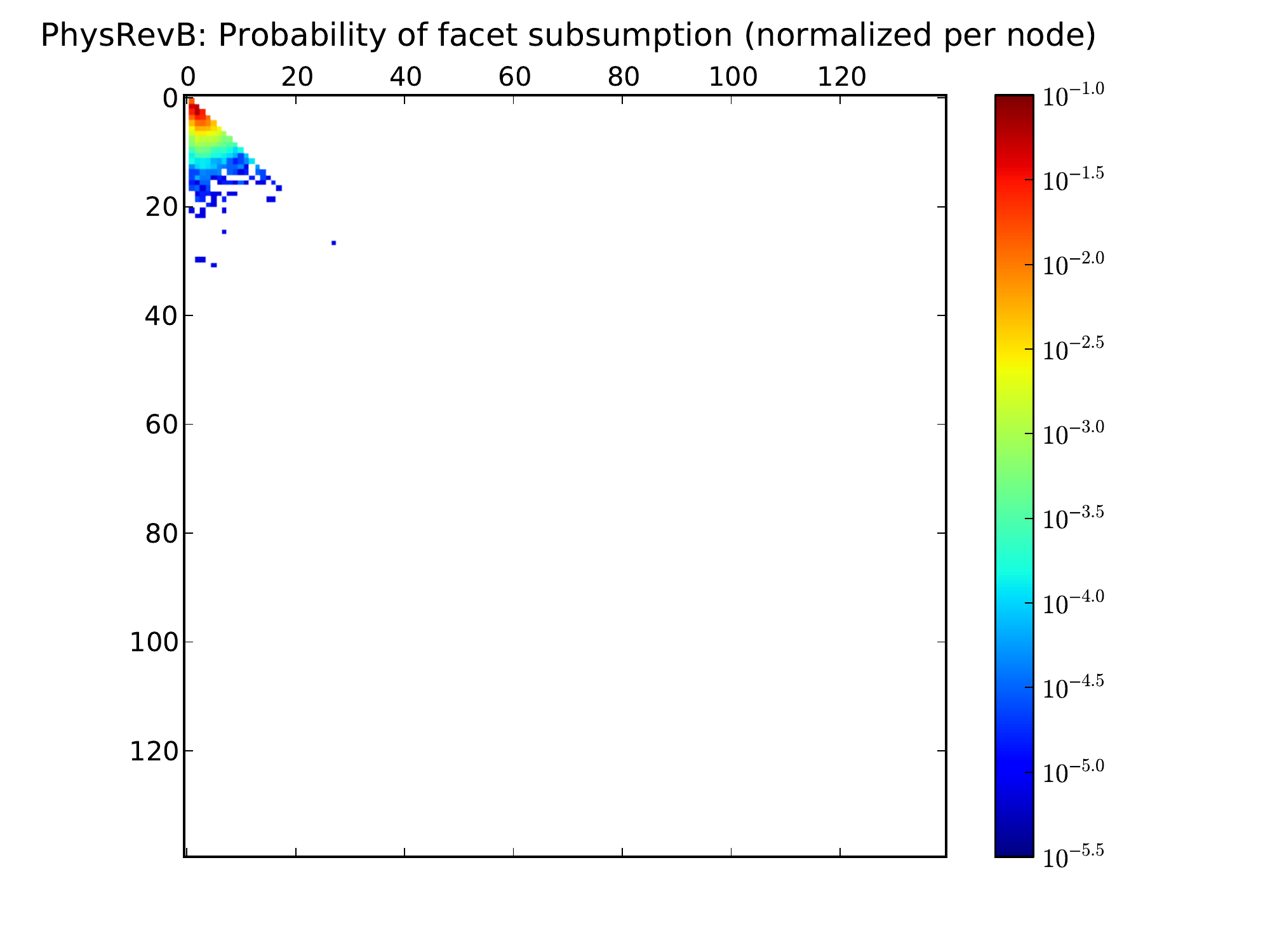}
\includegraphics[width=0.32\textwidth]{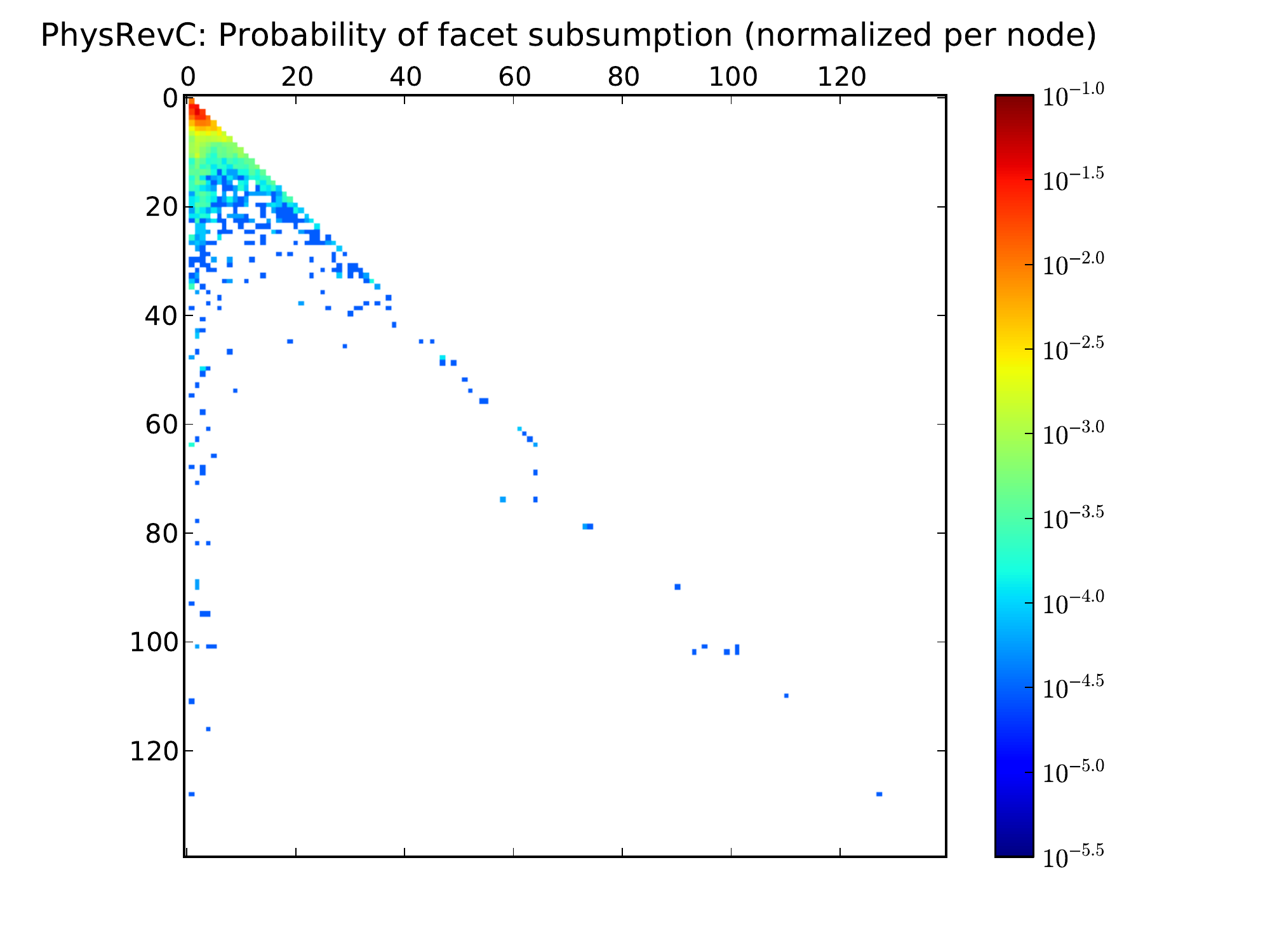}
\includegraphics[width=0.32\textwidth]{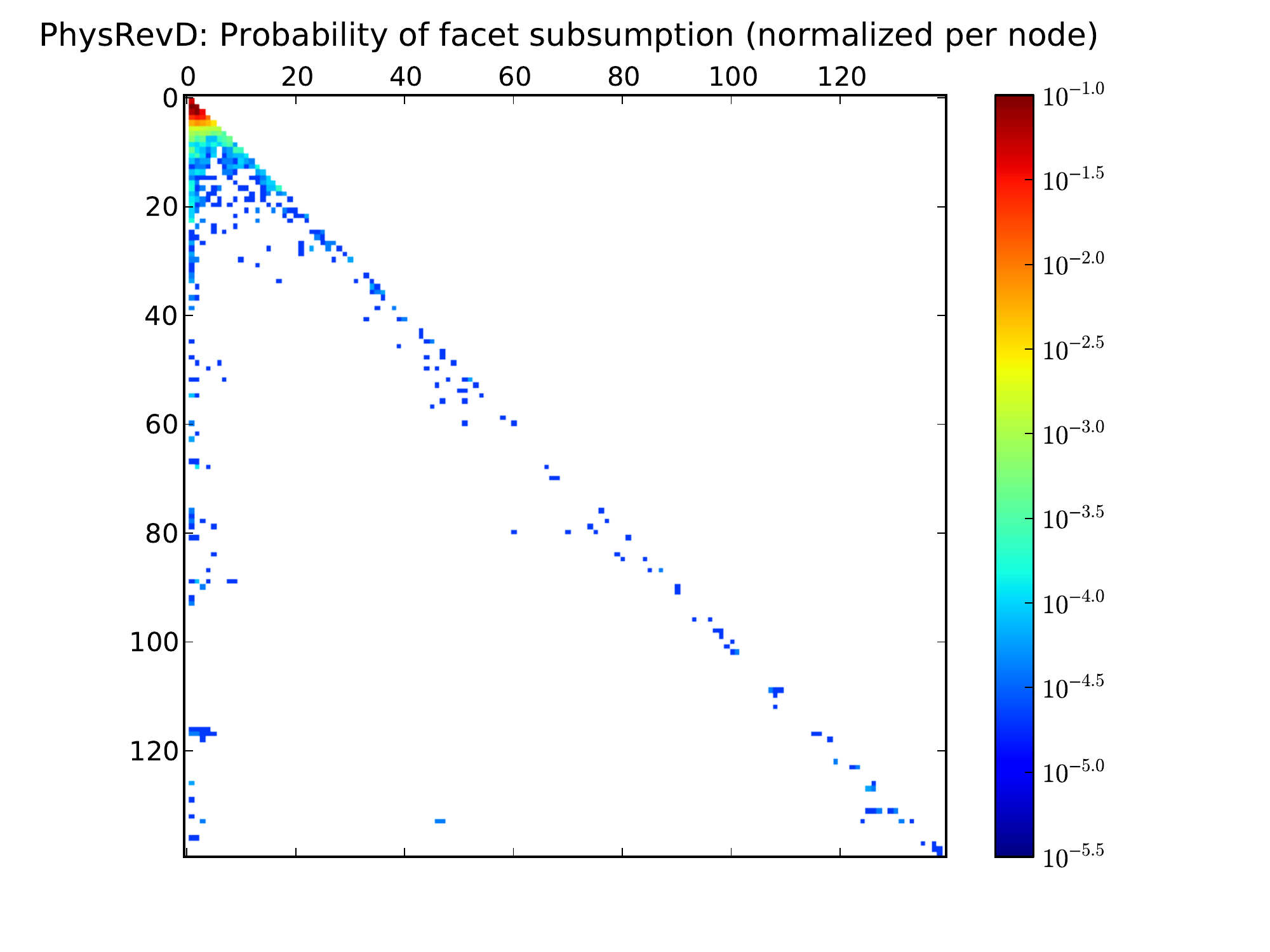}
\includegraphics[width=0.32\textwidth]{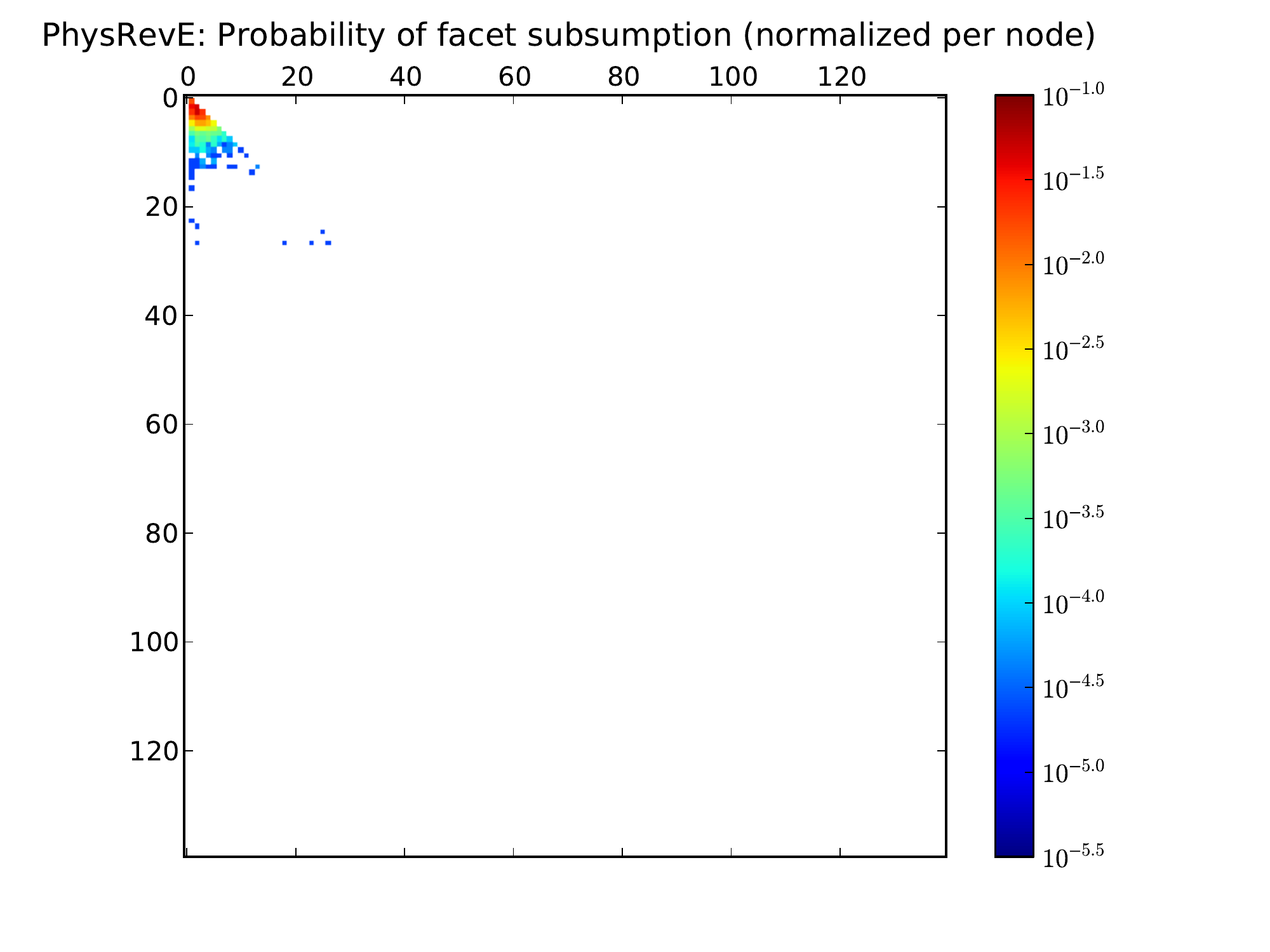}
\includegraphics[width=0.32\textwidth]{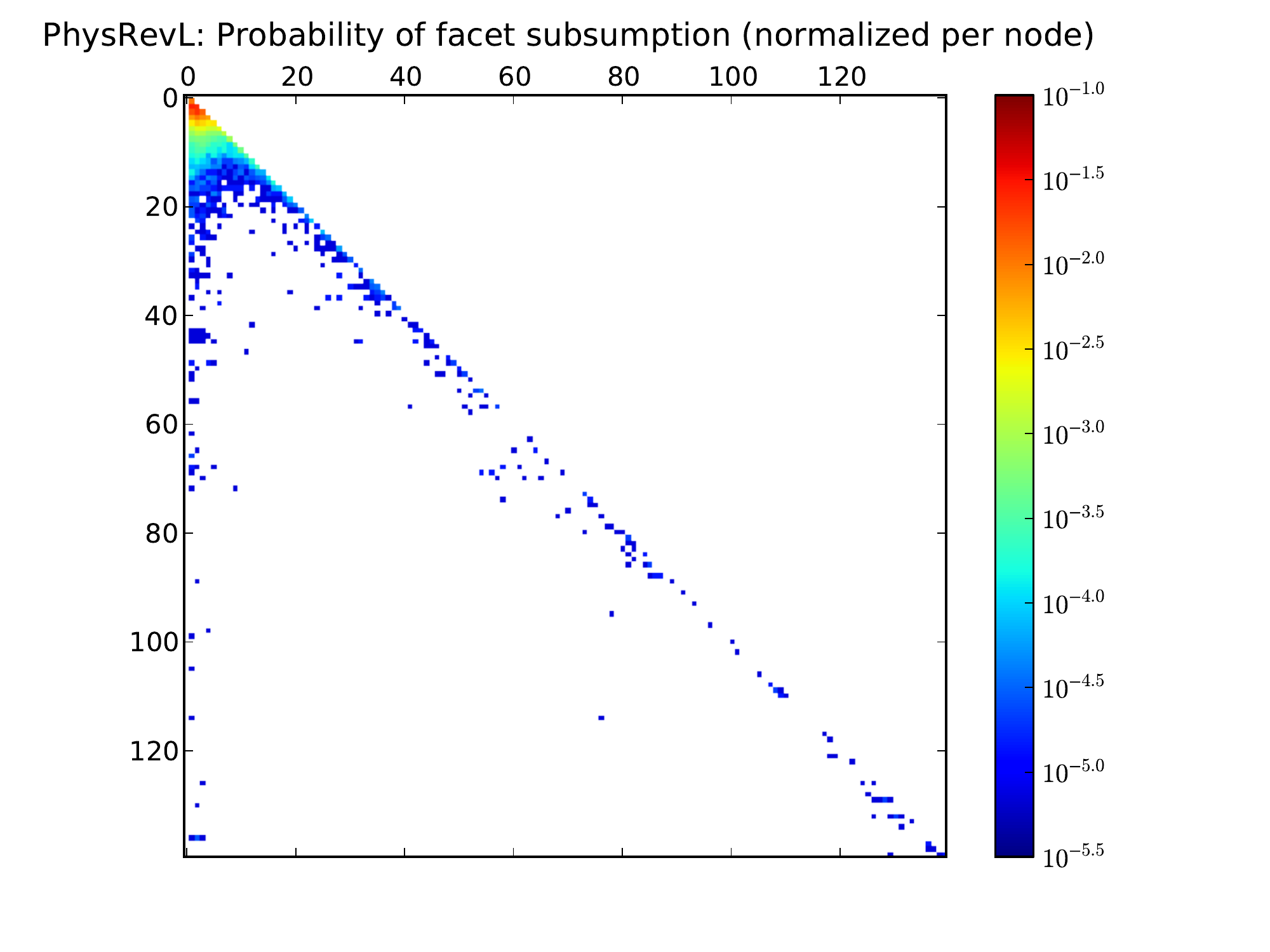}
\caption{\label{fig:subsumptions} Subsumptions in collaboration network data (in each plot, rows and columns correspond to sizes of \emph{subsuming} and \emph{subsumed} facets, respectively): (a) DBLP, (b) IMDB (only regular movies and cast), (c) IMDB 2000 (all movies, cast, and crew year 2000 onwards), (d) Phys Rev A, (e) Phys Rev B, (f) Phys Rev C, (g) Phys Rev D, (h) Phys Rev E, (i) Phys Rev Letters}\vspace{-0.2in}
\end{figure*}

While the above examples can be regarded as rather unlikely events, subsumption does occur quite frequently in real world collaboration. Figure \ref{fig:DBLP-HGvsSC} illustrates statistics for both hypergraph and SC models of DBLP co-authorship data. While the tail of the facet size distribution obeys a \emph{power law}, the head, where significant probability mass is concentrated, does not. In particular, a significant number of singleton authors get subsumed by the larger collaborations they participate in. Additionally, the slopes of the facet degree and hyperedge degree distributions are different. This can also be attributed to \emph{subsumptions} of smaller hyperedges by larger facets. Figure \ref{fig:PRD-HGvsSC} illustrates the difference between hyperedge and facet statistics for another co-authorship data set (Physical Review D journal). Like in Figure \ref{fig:DBLP-HGvsSC}, there is significant difference at the head of the size distributions, implying a significant frequency of subsumptions. Also, in both Figures \ref{fig:DBLP-HGvsSC} and \ref{fig:PRD-HGvsSC}, the tails of the facet degree distribution are shorter than those of the respective hyper-degree distributions. This can be attributed to the subsumption of many small hyperedges at the high-degree nodes, thus reducing the facet degree compared to their hyperedge degrees.

We now examine the nature of subsumptions in real collaboration datasets more closely. In Figure \ref{fig:subsumptions}, we plot nine real collaboration data sets, the number/percentage of subsumptions of facets of size $i$ by facets of size $j$ (where $j \geq i$). Obviously, this is a lower triangular matrix where the rows indicate sizes of \emph{subsuming} facets and columns indicate sizes of \emph{subsumed} facets.

We observe that the nature of subsumptions varies across the nine data sets. In DBLP, IMDB (with regular movies and cast only), Phys Rev A, Phys Rev B, and Phys Rev E, small facets are subsumed (first few columns). This is intuitively expected, since it is likely that smaller subsets of a collaboration are also valid collaborations. 

In Phys Rev D, Phys Rev L and Phys Rev C, there is a strong subsumption presence on and off the diagonals. These subsumption events model scenarios where a significant number of individuals are collaborating on a task and the exact set of individuals collaborate again, with perhaps a few additional collaborators such as a new graduate student joining a lab. These situations likely arise from the large endeavors typical of experimental physics where very large collaborations of laboratories result in a paper, as evidenced by Phys Rev D and L. In fact, for Phys Rev D, the diagonal has non-trivial mass even for collaboration sizes of 500, which have not been shown here. 

Finally, the IMDB data set that includes both cast and crew of movies (c) is a class apart in the above trend taken to a much larger magnitude. This is because a core crew tends to get utilized by a director in multiple movies. As seen from these figures, such events occur often in reality, hence subsumption is an important issue to address when modeling the structure of the underlying global collaboration network.



\section{Generative Growth models for Networks
Induced by Global Collaboration Relationships}
\label{sec:gengrowth}

Generative network growth models have received great interest over the past 15 years for classical (binary) graphs. The most prominent growth model has been \emph{preferential attachment}, which has been demonstrated to result in graphs with node degrees following power law distributions~\cite{DMS2000}, i.e. $k^{-\gamma}$, where $\gamma \in [2,3]$. Examples of other network growth models that have received attention are small-world models~\cite{Watts1998}, densification models~\cite{Leskovec2007}, and duplication models~\cite{Chung2003}, to name some.

In contrast, there has been a limited amount of work on generative models for group collaboration structures. In addition to the \emph{node degree} distribution, which has been a focal metric for classical network growth models, for collaboration structures more complex than graphs, the hyperedge size distribution is key. Hebert-Dufresne et al.~\cite{HAMND2011} proposed a generative algorithm called Structural Preferential Attachment (SPA) by progressively growing a hypergraph based on parameters that depend on the power law exponents of hyperedge size and degree distributions (their assumption was that both obey power law distribution). In SPA, \emph{two} free probability parameters are simultaneously controlled to generate new structures and attachment points in the current network in order to simultaneously match the tails of both the hyperedge degree and size distributions. However, as Figures \ref{fig:DBLP-HGvsSC} and \ref{fig:PRD-HGvsSC} suggest, this is not accurate for many collaboration \textit{relation} structures, particularly for the size distributions. More importantly, {\em structure-based} growth models~\cite{PDFV2005,HAMND2011} do not focus on modeling the structure of the underlying global collaboration network -- instead they model the ``artifacts" of collaboration, i.e., hyperedges.


To generate real world collaboration relation structures, we propose a generative facet-by-facet growth model based on preferential attachment, not based on an individual $X$'s {\em degree}, that is, how many people has $X$ collaborated with over a period of time, but with $X$'s {\em facet degree} which measures how many maximal collaborations or {\em facets} has $X$ been involved in. The basic intuition behind this is the following: individuals who are comfortable being part of several distinct collaboration endeavors are likely to attract more collaborators than the individuals who are happy participating in a fewer number of collaboration endeavors (albeit with several collaborators).

We assume that the facet size distribution is given to us and our primary aim is to generate a random simplicial complex that closely matches the ground truth distribution for facet degrees. We do this because the facet size distributions are often \emph{non-power law} and even \emph{non-monotonic}, particularly near the head of the distribution as in Figures \ref{fig:DBLP-HGvsSC} and \ref{fig:PRD-HGvsSC}, thus not obeying trends of typical distributions generated by random growth models. On the other hand, facet degree distribution in Figure \ref{fig:DBLP-HGvsSC} has monotonic behavior and is power law with exponential cutoff, implying that it might be possible to obtain it more accurately via random growth models. Accordingly, our facet-based generative model takes as input the facet size distribution and grows the \emph{collaboration simplicial complex} one facet at a time. During this process, a large facet may {\em subsume} a smaller existing facet since we are interested in capturing the fundamental structure of collaboration. 

In related work, we first acknowledge the early work in~\cite{colls} which empirically points out the basic  properties of scientific collaboration network structures. Over the past decade, a significant number of researchers have studied structures representing collaboration artifacts. Liu et al.~\cite{Liu2012} have proposed a preferential-attachment based growth model for ``affiliation networks". However, this is only a qualitative study without any theoretical analysis. The artifacts of collaboration are addressed in~\cite{npacol}, which again provides analytic results for only very special cases where the collaborative outputs are fixed size. This was also the assumption in~\cite{collevo} which studies the evolution processes for the network of scientific collaboration artifacts. On the other hand,~\cite{slov} provides an empirical study focused on a very specific scientific collaboration network, and ~\cite{distcol} considers a modified preferential attachment algorithm, again for collaboration artifacts along with numerical results. Our work goes beyond these models to provide both a theoretically sound treatment, and accounts for the key phenomenon of subsumption that occurs in various collaborative endeavors. Also, Wu et al.~\cite{Wu2015} have recently proposed a simplicial complex generation model but with a significantly different goal of characterizing the growing geometry of networks. Very recently, growth models for collaboration artifacts have been considered~\cite{clqc}, \eat{after our NetSci and AUTS(abstract online since late'14)} but they address neither the success of matching the size distribution nor subsumption phenomena, which is a focal point of this paper.
\eat{
\emph{While one can consider using the facet size distribution and generate a complex matching that distribution by configuration models, that has no control over the size distribution.}
}

\eat{
Basic structural properties of several collaboration networks, e.g., DBLP and IMDB modeled as simplicial complexes, have been studied recently~\cite{HoangRS14}. It has been observed that there exists a relationship between the number of facets $|F|$ and number of nodes $|V|$ in a simplicial complex $S=(V,F)$ as 
\begin{equation}\label{eq:cbeta}
|F|= c |V|^{\beta}
\end{equation}
Parameters $c, \beta$ can be estimated if data about the longitudinal evolution of the collaboration network is available, as is the case in some publication and movie databases. We exploit this relationship to steer our generative algorithm \GeneSCs to generate simplicial complexes that match real world datasets. \GeneSCs requires as input the facet size distribution, and parameters $c$ and $\beta$. Note that in some cases, longitudinal data is unavailable, so it is assumed that $\beta = 1$.
}

\eat{\textit{While one does not strictly have to use $c$ and $\beta$,...- there are many other alternative ways to using $c$ and $\beta$, however it turns out to be a good fit. -Contant $m$ discussion?}}

\eat{While we do identify the phenomenon of subsumption in our generative algorithm \GeneSCs, we are able to demonstrate that subsumption occurs very infrequently in \GeneSCs, hence we are successful in generating collaboration structures that are close to the real datasets.}

\begin{figure*}[htbp]
\begin{algorithmic}[1]
\hrule
\vspace{2pt}
\State \textbf{Algorithm} \GeneSCs$(z(\cdot),c,\beta,|V|)$
\vspace{2pt}
\hrule
\vspace{2pt}
\State $f \leftarrow 0$ 
\Comment{Initialize facet counter}
\While {$|V(f)| \leq |V|$}
\State $s\leftarrow$ {\sc RandomSample}$(z(\cdot))$
\Comment{Randomly generate size of new facet from distribution $z(s)$}
\State $F_f \leftarrow$ {\sc Facet}$(s)$ 
\Comment{Generate $(f+1)$-th facet, with size $s$}
\State $nv = \lceil(\frac{|F(f)|+1}{c})^{\frac{1}{\beta}}\rceil$
\Comment{Target number of nodes after $(f+1)$-th step following growth equation \ref{eq:cbeta}.}
\State $newv = nv - |V(f)|$ \Comment{Number of new nodes introduced, i.e., not to be connected to $S$}
\State $mergev = s - newv$ \label{line:merge}\Comment{Number of nodes in $F_f$ that need to be merged with $S$}
\While {$mergev > 0$} 
\State $u = $ {\sc SelectNode}$(f)$ \Comment{Pick next node in $F_f$}
\State $v = $ {\sc RandomNode}$(S,\frac{f_d(i)}{\sum\limits_{i\in V(f)} f_d(i)})$ \label{line:PA}\Comment{Preferential attachment on facet degree distribution of $S$}
\State Merge nodes $u$ and $v$\Comment{To prevent a given node in existing SC being selected to merge with multiple nodes of the newcomer facet $F_f$, we can sample without replacement, or just re-sample the existing SC until we find a node that has not been picked for merging in this step. Note that the probability of this happening vanishes as $f$ grows. (see Appendix \ref{appx:subsump})}
\State $mergev \leftarrow mergev - 1$ 
\EndWhile
\State $V(f+1) \leftarrow V(f)\:\cup $ {\sc Nodes}$(F_f)$
\State \Comment{The set of facets after checking if $F_f$ subsumes or is subsumed by one or more existing facets in $S$}
\State $F(f+1) \leftarrow$ {\sc FacetSubsumption}$(S,F_f)$ 
\State $S \leftarrow (V(f+1),F(f+1))$
\State $f \leftarrow f + 1$ \Comment{Increment facet counter}
\EndWhile
\State\Return $S=(V(f),F(f))$
\vspace{2pt}
\hrule
\end{algorithmic}
\caption{\GeneSCs algorithm for generating a random Simplicial Complex} \label{alg:GeneSCs}
\end{figure*}

\begin{figure*}[htbp]
\begin{algorithmic}[1]
\hrule
\vspace{2pt}
\State \textbf{Algorithm} {\sc FacetSubsumption}$(S,F_f)$ \Comment{$S$ is represented as a sparse matrix of $|F(f)|$ rows}
\vspace{2pt}
\hrule
\vspace{2pt}
\State $S \leftarrow S \cup F_f$ \Comment{Add a row to $S$ for now}
\For{$j=1$ to $|F(f)|$}
\State $F_j \leftarrow S(j)$\Comment{$j$-th row of $S$}
\If{$|F_f|\geq|F_j|$} \Comment{$F_j$ and $F_f$ are represented as sorted strings of node IDs.}
\If{{\sc Subsequence}$(F_j,F_f)$} \Comment{{\sc Subsequence}$(x,y)$ returns \textbf{true} if $x$ is a sub-string of $y$.}
\State Add $F_j$ to the delete list $D$ \Comment{$F_f$ subsumes $F_j$}
\EndIf
\Else 
\If{{\sc Subsequence}$(F_f,F_j)$}
\State Add $F_f$ to the delete list $D$ \Comment{$F_f$ is subsumed by $F_j$}
\State \textbf{break}
\EndIf
\EndIf
\EndFor
\State $S \leftarrow S \setminus D$ \Comment{Delete all subsumed facets by removing corresponding rows from $S$; update counters}
\vspace{2pt}
\hrule
\end{algorithmic}
\caption{Algorithm for computing subsumptions of facets by / into a Simplicial Complex.} \label{alg:subsump}\vspace{-0.15in}
\end{figure*}

\subsection{\GeneSCs: A Facet-based Preferential Attachment Model}
\label{sec:genescs}

\GeneSCs takes as input the facet size distribution $z(s)$ of a real data set (where $s$ is the facet size with $1\leq s\leq |V|$) and generates a random simplicial complex $(V,F)$ modeling that data set with facet degree distribution $p_k$ (where $k$ denotes facet degree). 

It has been observed from recent studies on simplicial complex models of collaboration networks such as DBLP and IMDB~\cite{HoangRS14} that in a simplicial complex $S=(V,F)$, there exists a relationship between the number of facets $|F|$ and number of vertices $|V|$: 
\begin{equation}\label{eq:cbeta}
|F|= c |V|^{\beta}
\end{equation}
Parameters $c, \beta$ can be estimated if data about the longitudinal evolution of the collaboration network is available. We exploit this relationship to steer \GeneSCs to generate simplicial complexes that match real world datasets. If longitudinal data is unavailable, we assume that $\beta = 1$; then $c$ is the \emph{facet density}.

The algorithm (shown in pseudo-code form in Figure \ref{alg:GeneSCs}) grows the simplicial complex $S$ by adding one randomly generated facet at a time, decides points of attachment in $S$, and checks for subsumption events. Note that before the addition of a facet indexed by integer $f$ and denoted by $F_f$, the state of the simplicial complex is denoted by $S(f)=(V(f),F(f))$.

\GeneSCs is computationally efficient. We store the Simplicial Complex in a sparse matrix with $|F(f)|$ rows and the number of columns is a maximum facet size encountered so far. The biggest computational bottleneck is the PA step in line \ref{line:PA}. To speed up the computation, we use the following identity relating the sum of facet degrees to the sum of facet sizes, a quantity which is easy to update after every step. Here, $f_d(u)$ is the facet degree of node $u$ and $|f|$ is the size (number of vertices) in facet $f$.
\baq
\sum\limits_{u\in V(f)} f_d(u) = \sum\limits_{f\in F(f)} |f|
\eaq
Another source of speedup is computing subsumptions by solving a problem of matching two small substrings corresponding to facets (small relative to $|V|$) as shown in Figure \ref{alg:subsump}.

\eat{
\eat{probabilities $p$, $q$, (and locality definition)}
\begin{enumerate}
\item For step $s$ (until stopping criterion in terms of number of vertices or facets met):

\item Randomly generate the size of the new facet $f(s)$ from the distribution.
\eat{
\item \textcolor{red}{With probability $q$, do not connect the new facet to the existing network (create an isolated facet).
\item (with probability $1-q$), f}
}
\item Define $V(t)$ and $F(t)$ as the number of vertices and facets of the simplicial complex at step $t$.
\item Find $V(s+1)$ as follows
\begin{equation}
F(s+1)= c \times V(s+1)^\beta
\end{equation}
where $F(s+1)=F(s)+1$, i.e.
$V(s+1)=round((\frac{F(s)+1}{c})^{\frac{1}{\beta}})$.

\item Find the number of new vertices introduced (not to be connected to the existing graph) by
\begin{equation}
v_{new}(s)=V(s+1)-V(s).
\end{equation}

\item Define $v_{con}(s):=f(s)-v_{new}(s)$. If $v_{con}(s)>0$, pick one of the vertices of the new facet randomly with equal probability. Merge that vertex with a vertex $v_1$ of the previously existing simplex by \emph{preferential attachment} in terms of \emph{facet degree}.
More precisely, a connection/merging is made with an existing node $i$ with the following probability:
\begin{equation}\label{pagen}
p(i)= \frac{f_d(i)}{\sum_{\mathcal{V}} f_d(i)},
\end{equation}
where $f_d(i)$ is the facet degree of vertice $i$, and $\mathcal{V}$ denotes the set of all vertices in the existing complex.
This procedure is repeated $v_{con}$ times to find merge all of the vertices of the new facet to be connected, and merge these vertices to the unique identified vertices of the exsting complex.

\eat{
\textcolor{red}{\item If $v_{con}(s)>1$, define the \emph{local set} $L$ at step $s$ as the nodes which are in the \emph{two-hop neighborhood} of $v_1$, and \emph{non-local set} $N$ as the remaining nodes in the existing simplex (vertices not belonging to the two-hop neighborhood of $v_1$), i.e. $N=\bar{L}/{v_1}$. Next, starting from $i=1$, pick a vertice of the facet which has not been connected to the existing simplex equally likely. With probability $p$, select a vertice from $L$ not connected to the new facet by preferential attachment in terms of facet degree. With probability $1-p$, select a vertice from $N$ not connected to the new facet by preferential attachment in terms of facet degree. Increment $i=i+1$ after each connected vertex, stop at $i= v_{con}-1$}
}

\item Check whether any of the previous facets have been subsumed due to the new connections. Proceed with the next step, update final
$V(s+1), F(s+1)$.

\end{enumerate}
}

\GeneSCs has several distinctions from other works that have proposed PA based generative growth models for hypergraphs~\cite{HAMND2011,Liu2012}. Since we are interested in modeling the global collaboration relation and not its artifacts, in our model the \emph{hyperedges} are subsumed to yield \emph{facets}, thus preserving the core structure underlying the participation of various individuals in a collaboration. We do not assume that the size of such collaborative structures is power law distributed. In fact, this is not the case for several collaboration networks, especially near the head, i.e., small collaborations. Accordingly, \GeneSCs takes \emph{as input} the distribution of facet sizes and average facet density (i.e., average number of collaborations per node) to generate a collaboration relation using a variant of PA.

The dynamics of \GeneSCs is distinct from classical PA in two ways. First, the structural unit of growth in each time step in our setting is a \emph{facet} which could contribute one or more nodes to the simplicial complex. In contrast, in classic PA, one node is added in each time step. Secondly, in classical PA new nodes are always added to the network, whereas in our case, some nodes in the newly generated facet may be merged with existing nodes; moreover, new facets may \emph{subsume} older facets or \emph{be subsumed} by older facets. This is consistent with how large collaboration networks grow -- new endeavors consist of both existing individuals and new individuals.

It is well known that preferential-attachment (PA) based network growth methods result in power law vertex degree distributions for classical graphs~\cite{DMS2000}. Other more general variants of preferential attachment have been analyzed extensively as well~\cite{Krapivsky2001}. We show below that the growth model behind \GeneSCs results in SCs with power law distributed facet degree. We also compute the power law exponent as a function of input parameters such as average facet size and average facet density, and show that it matches real world collaboration network data sets well.

\begin{theorem}[Facet degree properties]
\label{thm:fdeg}
If the average facet density (the average number of facets each node belongs to) is $c$ and average facet size is $s$, \GeneSCs generates a random simplicial complex whose facet degree distribution is power law with exponent $\alpha=2 + \frac{1}{c\:s-1}$.
\end{theorem}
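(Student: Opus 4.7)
The plan is to follow the master-equation (rate-equation) approach of Dorogovtsev-Mendes-Samukhin~\cite{DMS2000}, adapted to the facet-based dynamics of \GeneSCs. Let $t$ index the facet-addition step, so $|F(t)| = t$. The growth law (\ref{eq:cbeta}) with $\beta = 1$ fixes the target $|V(t)| = t/c$, so in expectation each new facet contributes $v_{\text{new}} = 1/c$ brand-new nodes (entering with facet degree $1$) and $m := s - 1/c$ merging nodes that attach to the existing complex by preferential attachment on facet degree. The key identity that makes PA tractable here is
\begin{equation}
\sum_{u \in V(t)} f_d(u) \;=\; \sum_{\sigma \in F(t)} |\sigma| \;\approx\; s\,t,
\end{equation}
so the PA normalizer is $s\,t$, and a node of current facet degree $k$ receives one of the $m$ merging attachments at step $t+1$ with probability $m k/(s t)$.

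Writing $N_k(t)$ for the number of nodes of facet degree $k$ at time $t$, the expected one-step change gives the rate equation
\begin{equation}
\frac{d N_k}{dt} \;=\; \frac{m}{s\,t}\bigl[(k-1)\,N_{k-1}(t) - k\,N_k(t)\bigr] + v_{\text{new}}\,\delta_{k,1}.
\end{equation}
Imposing the scaling ansatz $N_k(t) \sim (t/c)\,p_k$, the explicit $t$-dependence cancels and, writing $A := m/s = (c s - 1)/(c s)$, one obtains the recurrence
\begin{equation}
(1 + A k)\,p_k \;=\; A\,(k - 1)\,p_{k-1} + \delta_{k,1}.
\end{equation}
For $k \geq 2$ this is the homogeneous first-order relation $p_k/p_{k-1} = A(k-1)/(1 + A k)$, whose iterate is $p_k = p_1\,\Gamma(k)\,\Gamma(2 + 1/A)/\Gamma(k + 1 + 1/A)$; Stirling's asymptotic $\Gamma(k+a)/\Gamma(k+b) \sim k^{a-b}$ then gives $p_k \sim k^{-(1 + 1/A)}$. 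Substituting $1/A = c s/(c s - 1)$ yields the advertised exponent $\alpha = 1 + 1/A = 2 + 1/(c s - 1)$, while the $k = 1$ equation $(1 + A)\,p_1 = 1$ fixes the normalization.

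The main obstacle is justifying the two mean-field substitutions underlying the rate equation. First, we replaced the random PA normalizer $\sum_\sigma |\sigma|$ by its mean $s t$; since facet sizes are drawn i.i.d.\ from $z(\cdot)$, a law-of-large-numbers argument shows this sum concentrates around $s t$ with fluctuations of order $\sqrt{t}$, and a Doob-martingale / Azuma-Hoeffding bound on the Lipschitz functional $N_k(t)$ can promote this heuristic into a true concentration statement for the empirical facet-degree distribution. Second, and more delicate, is controlling subsumption: each subsumption deletes a facet and decrements the facet degrees of its members, so in principle it introduces extra sink terms on the right-hand side. As claimed in Appendix~\ref{appx:subsump}, however, the probability that a freshly drawn facet subsumes or is subsumed by an existing one tends to $0$ as $t \to \infty$ under the PA sampling, so the cumulative subsumption correction is $o(t)$ and cannot shift the leading power-law tail. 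A minor extension handles $\beta \neq 1$: the substitution $|V(t)| = (t/c)^{1/\beta}$ rescales $v_{\text{new}}$ and the prefactor on the left-hand side of the rate equation, but the $k \geq 2$ homogeneous recurrence governing the tail is unaffected, so the exponent formula persists.
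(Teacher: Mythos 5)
Your proposal is correct and follows essentially the same route as the paper's proof: a mean-field master equation for the facet-degree counts with preferential attachment normalized by $\sum_u f_d(u)=\sum_{\sigma}|\sigma|\approx s\,t$, $1/c$ new degree-one nodes and $s-1/c$ merges per facet, subsumption deferred to Appendix~\ref{appx:subsump}, and Gamma-function asymptotics yielding $\alpha=2+\frac{1}{c\,s-1}$. The only differences are presentational (a continuous-time rate equation with a linear scaling ansatz rather than the paper's discrete recursion in $f$, and omission of the initial-attractiveness parameter $a$, which the paper sets to $0$ anyway), plus a welcome extra remark on concentration that the paper does not attempt.
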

\begin{proof}
We use mean field arguments in this proof. Let the current number of nodes and facets in the simplicial complex (SC) be denoted by $n$ and $f$, respectively. Since $c$ is the facet density, we have $c=\frac{f}{n}$, at least when the SC has grown large in size. At the current time step, a new facet arrives into SC and the facet count becomes $f+1$. Simultaneously, the node count is expected to increase to $\frac{f+1}{c}=n+\frac{1}{c}$ (Note that for the purpose of clarity, throughout the analysis we ignore the effects of rounding to integer values for some of the variables.)

Let $p_k(f)$ be the fraction of nodes in SC with facet degree (fdegree) $k$ when there are $f$ facets in the SC. If node $i$ has fdegree $k_i$, then the PA step in \GeneSCs will merge a node from the newly arriving facet into node $i$ with probability $p_i=\frac{k_i+a}{\sum\limits_{i=1}^n (k_i+a)}=\frac{k_i+a}{\sum\limits_{i=1}^n k_i+ a\:n}$, where $a$ is the initial attractiveness parameter \cite{DMS2000}. It can be observed that if facet sizes are given by $s_j, j\in [1,f]$, we have $\sum\limits_{i=1}^n k_i = \sum\limits_{j=1}^f s_j = f\:s$. Therefore, $p_i = \frac{k_i+a}{f\:s + a\:n} = \frac{k_i+a}{c\:n\:s + a\:n}=\frac{k_i+a}{(c\:s+ a)\:n}$.

When the $(f+1)$-th facet of average size $s$ is added to SC, on average the number of new nodes that are added to SC are $\frac{f+1}{c}-\frac{f}{c}=\frac{1}{c}$. Therefore, \GeneSCs attempts to merge $s-\frac{1}{c}$ nodes (on average) in the new facet with old nodes in SC by performing PA independently for each such node. 

Just like in the regular PA for graphs, the probability of more than one node getting merged with a single old node in SC goes is vanishingly small as $n, f \to \infty$, hence we assume that each of the $s-\frac{1}{c}$ nodes in the new facet get merged to distinct nodes in SC with high probability (Please see Lemma \ref{appA1}). Since there are $n p_k(f)$ nodes in the SC with fdegree $k$, the expected number of new collaborations (this new facet) picked up by all nodes of fdegree $k$ in SC as a result of the addition of the new facet is given by $n p_k(f) \frac{k+a}{(c\:s+ a)\:n} (s-\frac{1}{c}) = p_k(f) \frac{k+a}{c\:s+ a} (s-\frac{1}{c})$. Thus, the expected number of nodes in SC whose fdegree becomes $k+1$ as a result of the facet arrival is given by $p_k(f) \frac{k+a}{c\:s+ a} (s-\frac{1}{c})$.

We observe that for each node with fdegree $k-1$ that gets merged with the new facet, the number of collaborations increases by one, thus increasing their fdegree to $k$. Applying the above reasoning, the expected number of such collaborations is thus $p_{k-1}(f) \frac{k-1+a}{c\:s+ a} (s-\frac{1}{c})$. Also, the expected number of nodes with fdegree $k$ after the addition of the new facet is $(n+\frac{1}{c})p_k(f+1) = \frac{f+1}{c} p_k(f+1)$, since there are $n+\frac{1}{c}$ nodes in the SC at this stage.

It can be shown that the facet count increases by one at least for large SCs, since under \GeneSCs the probability of a newly arriving facet subsuming an existing facet or getting subsumed becomes vanishingly small as $n, f \to \infty$. See Lemmas \ref{lem:subs-prob}, \ref{lem:subs-lessmult}, \ref{lem:subs-v-size}, \ref{lem:subs-gt1}, \ref{lem:App6}, and Remark \ref{rem:App7} in Appendix \ref{appx:subsump} for details.
The fact that the amount of subsumption resulting from \GeneSCs is small is desirable since we utilize the facet size distribution $f(s)$ as an input parameter, and all the hyperedge to facet subsumption is already captured in $f(s)$. Therefore, one can set up the ``master equation" that results from the conservation of collaboration counts after the addition of the $(f+1)$-th facet into SC:
\begin{align}\label{eq:meq}
\nonumber(n\!+\!\frac{1}{c})p_k(f\!+\!1) &= n p_k(f) \!+ \frac{k-1+a}{c\:s+ a} (s\!-\!\frac{1}{c}) p_{k-1}(f) \\
&- \frac{k+a}{c\:s+ a}\:(s-\frac{1}{c})\:p_k(f)
\end{align}

Substituting $n = \frac{f}{c}$ into Eq. (\ref{eq:meq}), we get the master equation as a function of $f$ alone.
\begin{align}\label{eq:meqk}
\nonumber(f\!\!+\!\!1)p_k(f\!\!+\!\!1) &= f p_k(f) + \frac{(k\!-\!1\!+\!a)(c\:s-1)}{c\:s+ a} p_{k-1}(f)\\
&- \frac{(k+a)(c\:s-1)}{c\:s+ a}\:p_k(f)
\end{align}

Note that for $k=1$ (the lowest fdegree in SC), Equation (\ref{eq:meqk}) is not accurate since there is no dependence on $p_{k-1}(f)$. Instead, the new facet has a contribution of $\frac{1}{c}$ new nodes of fdegree 1, on average. We reflect this in the following equation for $k=1$:
\begin{align}\label{eq:meq1}
\nonumber(f+1)p_1(f+1) &= f p_1(f) + 1 \\
&- \frac{(1+a)(c\:s-1)}{c\:s+ a}\:p_1(f)
\end{align}

Assuming that $p_k(f)$ converges to $p_k$ when $f\to\infty$ for all $k$, we rewrite Equation (\ref{eq:meq1}) and substitute $g = c\:s$ to get:
\begin{align}
\nonumber p_1 &= 1 - \frac{(1+a)(g-1)}{g+a}\:p_1 \\
\Rightarrow p_1 &= \frac{g+a}{g(1+a) + g-1}
\end{align}

Performing similar transformations to Equation (\ref{eq:meqk}), we get
\begin{align}
\nonumber p_k &= \frac{k-1+a}{k+\frac{g(1+a)}{g-1}}\: p_{k-1}\\
&= \frac{(k-1+a) \cdots (1+a)}{(k+\frac{g(1+a)}{g-1}) \cdots (2+\frac{g(1+a)}{g-1})} \cdot \frac{g+a}{g-1+g(1+a)}
\end{align}

Using the basic recurrence for Gamma functions $\Gamma(x+1)=x\Gamma(x)$, we have the identity $\frac{\Gamma(x+n)}{\Gamma(x)} = (x+n-1)(x+n-2)\cdots x$. Using this identity, we have:
\begin{align}\label{eq:pk}
\nonumber p_k &= \frac{\Gamma(k+a)\:\Gamma(2+\frac{g(1+a)}{g-1})}{\Gamma(1+a)\:\Gamma(k+1+\frac{g(1+a)}{g-1})}\cdot\frac{g+a}{1+\frac{g(1+a)}{g-1}}\\
\nonumber &= \frac{\Gamma(k+a)\:\Gamma(1+\frac{g+a}{g-1})}{\Gamma(k+1+\frac{g(1+a)}{g-1})} \cdot \frac{\Gamma(2+\frac{g(1+a)}{g-1})}{\Gamma(1+a)\:\Gamma(1+\frac{g+a}{g-1})} \\
\nonumber &\times \frac{g+a}{g-1+g(1+a)}\\
&= \frac{B(k+a,1+\frac{g+a}{g-1})}{B(1+a,1+\frac{g+a}{g-1})} \cdot \frac{g+a}{g-1+g(1+a)},
\end{align}
where $B(x,y)=\frac{\Gamma(x)\Gamma(y)}{\Gamma(x+y)}$ is the Beta function.

For large $x$, $B(x,y)$ exhibits power law behavior; specifically, $B(x,y) \approx x^{-y} \Gamma(y)$. In Equation (\ref{eq:pk}), the only term that is dependent on $k$ is $B(k+a,1+\frac{g+a}{g-1})$. Applying the aforementioned power law approximation for large $k$, we get:
\begin{align}
\nonumber p_k &\sim (k+a)^{-(1+\frac{g+a}{g-1})} \sim k^{-(1+\frac{g+a}{g-1})} \\
&= k^{-(2+\frac{1+a}{g-1})} = k^{-(2+\frac{1+a}{c\:s-1})}
\end{align}
Therefore, the facet degree of a large SC generated by \GeneSCs is power law distributed with exponent $2+\frac{1+a}{c\:s-1}$. Since we set the attractiveness parameter $a=0$ in the default mode of \GeneSCs, the result follows.
\end{proof}

Note that the denominator of the exponent $c\:s-1$ is strictly positive for simplicial complexes of interest. This is because $c\:s=\frac{s\:f}{n}=\frac{\sum\limits_{j=1}^f s_j}{n}=\frac{\sum\limits_{i=1}^n k_i}{n}$. Since $\forall i: k_i \geq 1$, for any non-degenerate SC that is not a disjoint union of full dimensional simplexes, $\sum\limits_{i=1}^n k_i > n$, and therefore $c\:s -1 > 0$.

The quantity $c\:s$ can be interpreted as the average number of \emph{collaborators} of a node, while counting each collaborator distinctly for every new collaboration. This is distinct from the average number of collaborators of a node\textit{, which is given by the average node degree}.

\begin{figure*}[!t]
\centering
\includegraphics[trim=0 2.5in 0 2.5in,width=0.5\textwidth]{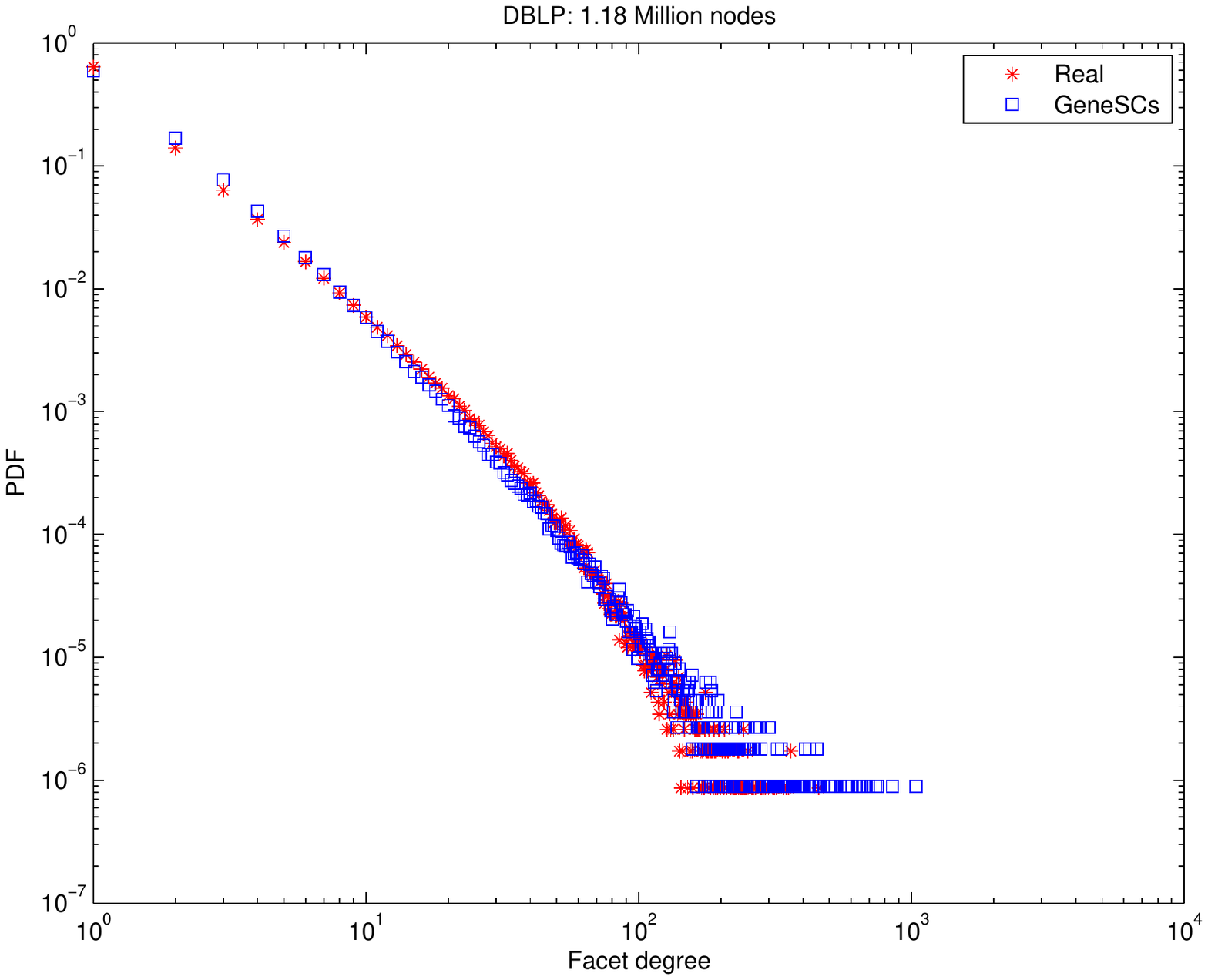}
\includegraphics[width=0.48\textwidth]{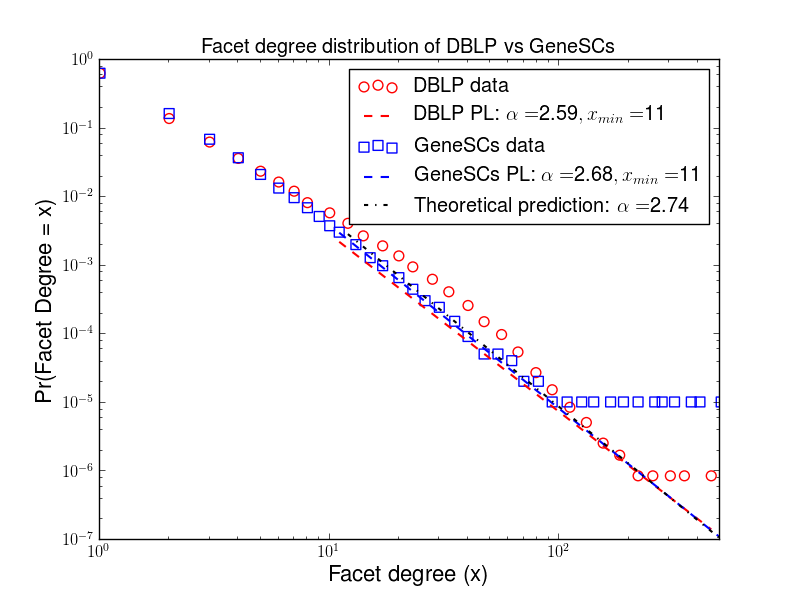}
\caption{\label{fig:DBLP-performance}Performance of \GeneSCs vs. real DBLP collaboration data. (a) $D_{TV}\approx 0.035$. (b) Tail behaviors of various power law curve superimposed on logarithmic binned data. Since $c=0.68, s=3.44$, the analytical prediction of the power law exponent of \GeneSCs is $\alpha \approx 2.74$. The noisy tails are due to a small number of samples, which is typical of logarithmic binning.}
\end{figure*}
\begin{figure*}
\centering
\includegraphics[scale=0.29]{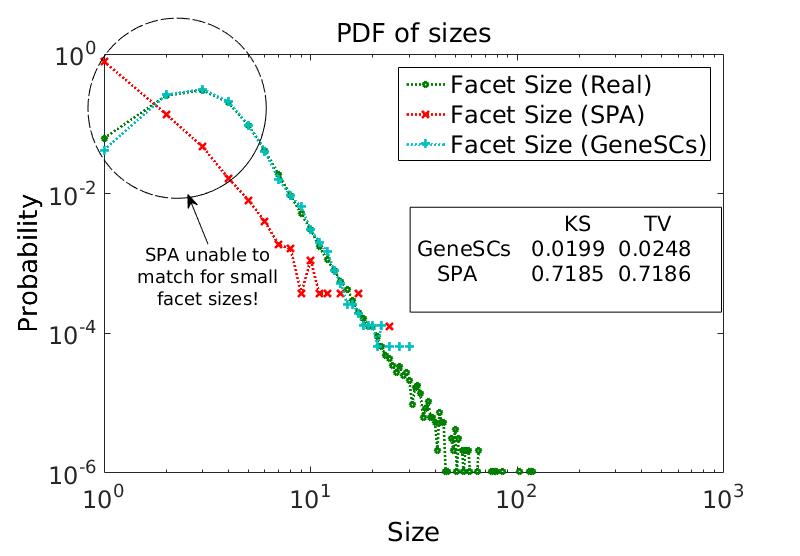}
\includegraphics[scale=0.29]{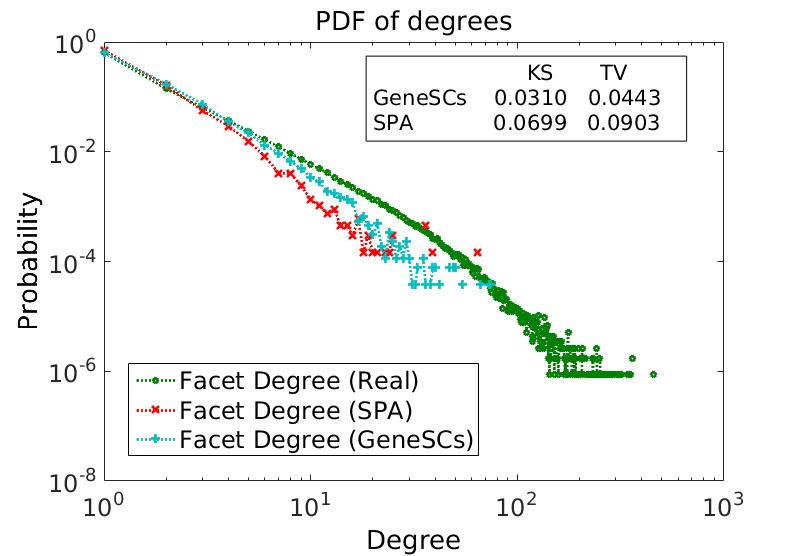}
\caption{\label{fig:gsc-v-spa} Comparison with SPA algorithm for DBLP}
\end{figure*}

\subsection{Performance evaluation of \GeneSCs}
\label{sec:eval}

We measured the quality of the distribution generated by \GeneSCs using the Kolmogorov-Smirnov distance $D_{KS}$ and the Total Variation distance $D_{TV}$ between real data distributions $p(\cdot)$ and the generated distributions $q(\cdot)$ for both facet sizes and facet degrees: 
\begin{eqnarray}
D_{KS}(p||q) & = & \sup_x |p(x)-q(x)|\\
D_{TV}(p||q) & = & \frac{1}{2} \sum_x |p(x)-q(x)|
\end{eqnarray}

Note that the facet sizes in \GeneSCs are generated using the facet size distribution of real collaboration data sets, where the effect of subsumptions has already been incorporated in the first place. We observed that \GeneSCs yields $D_{KS}\approx 0$ and $D_{TV}\approx 0$ for facet sizes, confirming the fact that subsumption events are rare if one is drawing random collaboration structures from the \emph{facet} size distribution instead of the \emph{hyperedge} size distribution. Analytic arguments for this phenomenon are presented in Appendix \ref{appx:subsump}.

Figure \ref{fig:DBLP-performance} illustrates the performance comparison of \GeneSCs with respect to real DBLP publication data and the theoretical prediction of Theorem \ref{thm:fdeg} as far as the facet degree distribution is concerned. It can be observed that \GeneSCs matches well the characteristics of both the real facet degree distribution and the theoretical prediction.

Figure \ref{fig:gsc-v-spa} compares the relative performance of the Structured Preferential Attachment (SPA) algorithm~\cite{HAMND2011,HAMND2012} with \GeneSCs. When applying SPA, first the best parameters that SPA fit for DBLP's hyperedge degree and size distributions are obtained and used for hyperedge generation. Then, we have inflicted subsumptions on that data and plotted it. It can be observed from Figure \ref{fig:gsc-v-spa}(a) that SPA, which is designed to generate pure power law distribution for both hyperedge sizes and hyper-degrees, is unable to match the real facet size distribution of the DBLP data set $(D_{TV}=0.719)$. It is also unable to closely match the facet degree distribution of DBLP $(D_{TV}=0.09)$, especially near the heavy tail. Moreover, it has a markedly different slope. In contrast, since \GeneSCs samples the facet size distribution, it is able to yield a close match  to (obviously) that distribution $(D_{TV}=0.0248)$ (the minor difference is due to the occurrence of some subsumptions at low sizes), and the facet degree distribution $(D_{TV}=0.044)$.

\eat{
\subsection{Temporal growth properties}
\textcolor{red}{(Need to put something about the temporal growth models)}
}
\subsection{Variants of \GeneSCs: Smoothed Preferential Attachment}
\label{sec:smoothPA}

While \GeneSCs performs notably well for DBLP using pure preferential attachment (PA) on facet degrees, it is unable to generate the facet degree distributions for collaboration networks such as Physical Review D (PRD) and IMDB. Hence, we propose two variants of PA to address this drawback.

\vspace{1ex}
\noindent\textbf{Clamped Preferential Attachment}
While PA is likely to be a basic force behind collaboration network formation, some domains leverage other peculiar behaviors while forming large collaborative structures. For example, the circumstances and motives behind the formation of ``organic" academic collaborations are typically different from those driving the formation of collaborations in entertainment or artistic fields (e.g. movies). Even in academia, the way research is conducted highly depends on the particular sub-field, as theoretic and experimental communities have different processes of forming groups and performing collaborative research. Hence, it is not surprising that \GeneSCs is not a one-size-fits-all solution. For instance, the maximum facet degree of the generated SC tends to exceed the maximum facet degree of the real data set. Moreover, the facet degree distributions of the generated and real SC have an unacceptably high total-variation distance, $D_{TV} = 0.2$.

Reflecting on this behavior, we posit the following hypothesis for real networks: individuals typically do not categorize the popularity of other individuals precisely by the number of connections; rather, they might make better assessment regarding people with \emph{similar} popularity, i.e., they have a coarser perception, particularly for very popular individuals. This is indeed an issue with collaborations in experimental physics, which tend to publish in Physical Review D. There are several papers reporting results from large experiments with very large sets of co-authors (See Figure \ref{fig:PRD-HGvsSC}). In such networks, a coarse grained view of popularity is likely to better explain network growth.

To model this hypothesis, we modify \GeneSCs in the PA step (particularly defined on line \ref{line:PA} of Algorithm  \GeneSCs). More precisely, rather than using the exact facet degree, we \emph{clamp} the facet degrees of each node in order to smooth the perception of popularity of the very high degree nodes. In the most basic form, one can achieve this as follows:
\begin{equation}
f_{cld}(u) = \min(f_d(u), F_d^{clamp}),
\end{equation}
where $f_d(u)$ is the actual facet degree of node $u$, $F_d^{clamp}$ is the maximum clamp value and $f_{cld}(u)$ is the clamped facet degree which is input to the PA step of \GeneSCs.

\begin{figure}[ht]
\centering
\vspace{-0.17in}\includegraphics[width=1.0\columnwidth]{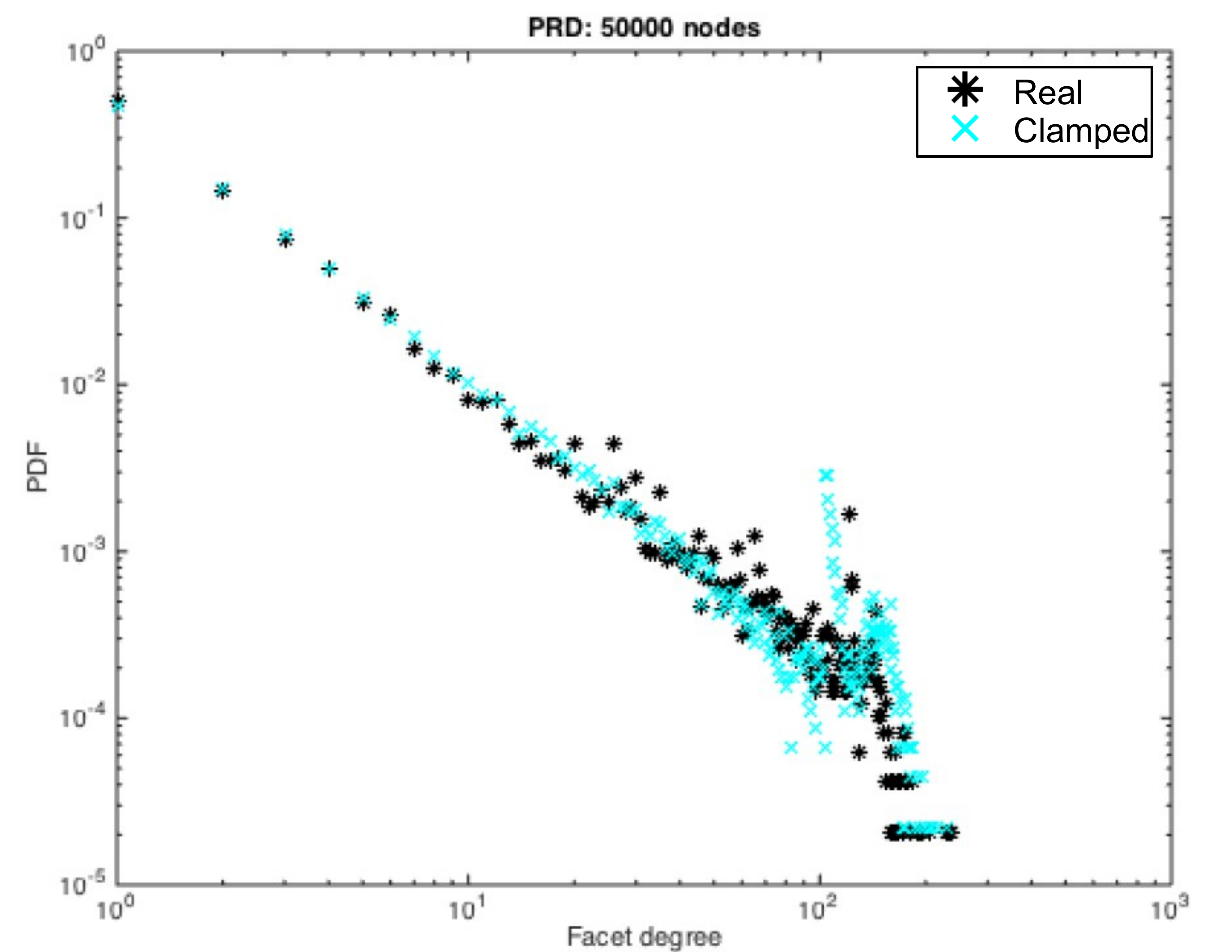}
\caption{\label{fig:clampPA} Clamped PA: Phys Rev D data ($\alpha=0.25$): $D_{TV} = 0.035$}\vspace{-0.1in}
\end{figure}

While this method classifies many very popular individuals as \emph{popular}, it is likely to be limited, since too much granularity is lost. Consequently, we actually use a softer mapping (unlike the step function mentioned earlier) as follows:
\begin{equation}\label{pasclam}
f_{cld}(u)=\begin{cases}f_d(u), &\hbox{if}\: f_d(u)<\alpha f_d^{\max}\\
\sqrt{\alpha f_d(u) f_d^{\max}}, &\hbox{if}\: f_d(u)\geq \alpha f_d^{\max}
\end{cases},
\end{equation}
where $\alpha$ is a design parameter, and $f_d^{\max}$ is the maximum of actual facet degrees at the current step, i.e., $f_d^{\max}=\max_{u} f_d(u)$.

Effectively, this assignment \emph{still} differentiates among the \emph{very popular} nodes, but smooths their relative popularity values considerably.

After this mapping is performed, \GeneSCs operates using these alternative facet degrees. For the sake of completeness, line \ref{line:PA} of Algorithm \GeneSCs is replaced by the following modified-PA rule:
\begin{equation}\label{pagen}
p(u)= \frac{f_{cld}(u)}{\sum\limits_{u \in V} f_{cld}(u)},
\end{equation}
with $f_{cld}(u)$ obtained through (\ref{pasclam}). Here we note that while modifying standard PA has been considered in \cite{npacol}, which propose a nonlinear preferential attachment by replacing node degree $f_d(u)$ by $f_d(u)^{\beta}$ for each node in the connection process with some given $\beta$ , our clamped PA defined by the mapping in (\ref{pasclam}) is significantly different and also depends on many distinct parameters as $\alpha$ and $f_d^{\max}$.   

Figure \ref{fig:clampPA} shows that the clamped PA can yield a close match to the facet degree distribution in the PRD data set, including the spikes at the tail.

\begin{figure}[ht]
\centering
\includegraphics[width=1.0\columnwidth]{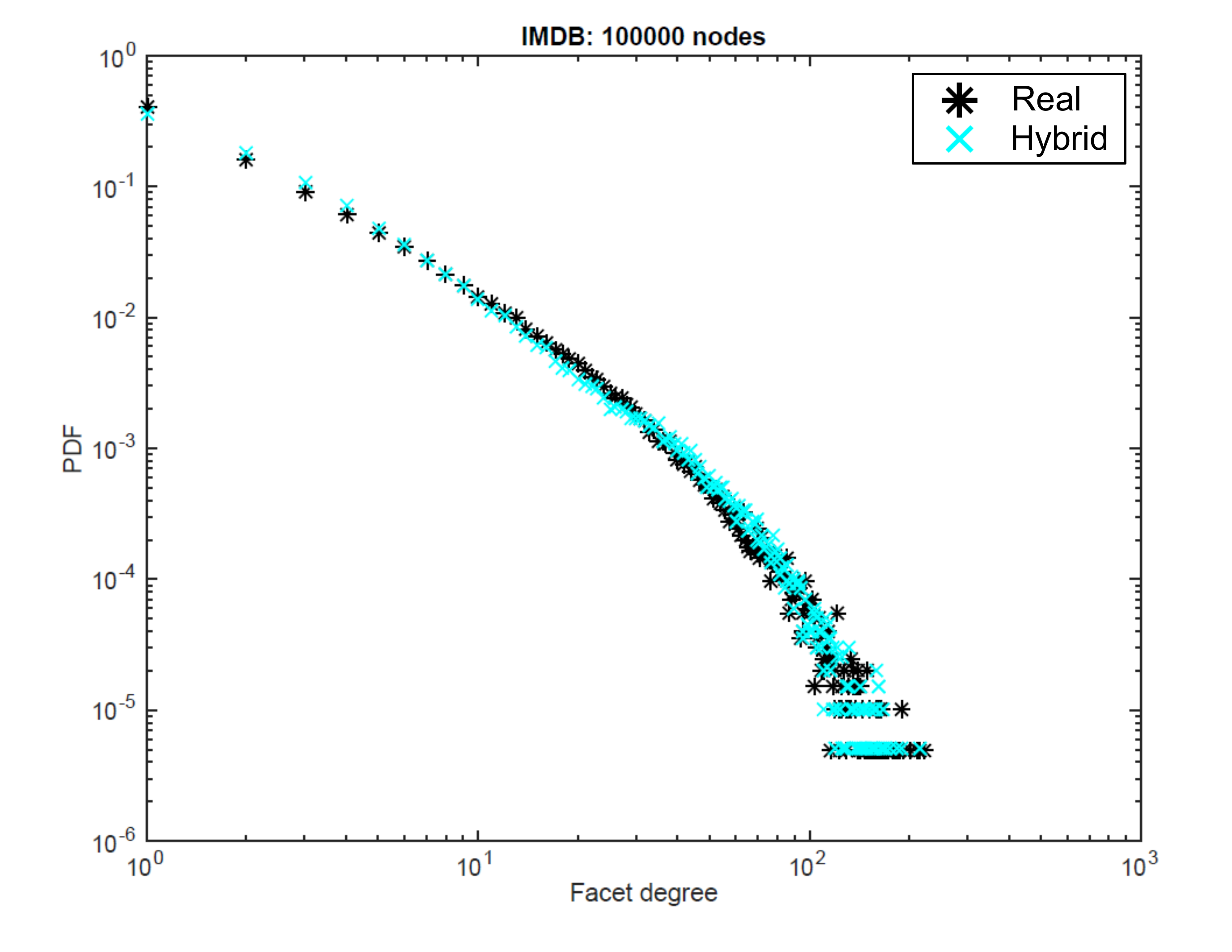}
\vspace{-0.3in} \caption{\label{fig:hybrid} Hybrid PA and uniform attachment: IMDB data $(T=12,\alpha=0.25)$: $D_{TV} = 0.04$}
\end{figure}

\vspace{1ex}
\noindent\textbf{Hybrid Uniform and Preferential Attachment}

Another behavior that we intuitively expect in building collaboration structures is that it is rarely the case that every member of a large collaboration is well-known. In other words, typically a limited number of collaborators are popular and the remaining ones are not really well-known or popular. For instance, consider a movie cast. It is often the case that only a small subset (e.g. 10-15) of the whole movie cast constitutes well-known actors, which have more screen time and central roles (and hence are immediately identifiable by an average movie goer), while the rest are mostly figures with minor side roles.

Based on the above intuition, particularly for IMDB style collaborations, we propose the following variant to \GeneSCs: At line \ref{line:merge} of the algorithm, out of $mergev$ connections to the existing simplicial complex, only the first $T$ (or $min(mergev,T)$) nodes to merge are determined by applying PA on the facet degrees. The remaining $mergev-T$ nodes (assuming $mergev>T$) to be merged are selected \emph{uniformly} at random, regardless of their facet degrees. We observe that this \emph{preferential-uniform hybrid} \GeneSCs along with clamped facet degrees for the PA sub-routine provides a much better match to the distributions of actual IMDB dataset, thus verifying our qualitative intuitions. This can be observed in Figure \ref{fig:hybrid}.

While the idea of combining preferential attachment and uniform attachment has been considered before~\cite{unipahyb}, it has been done so only for graphs. Moreover, the latter approach \emph{probabilistically mixes} the two methods for every connection, in contrast to our threshold-based approach for generating SCs.

\eat{ FD: 0.0232, 0.1136; FS: 0.0344, 0.1257 for monotonic sqrt variant with $\alpha=0.25$ and PRD.}
\section{Concluding Remarks}

In this paper, we proposed \GeneSCs, a generative model for collaboration structures modeled as simplicial complexes (SC). SCs are different from graphs since they have two different dimensions for growth (facet size and facet degree), whereas graphs only have one (node degree), since all edges have identical sizes. SCs are also different from hypergraphs, which do not have the subsumption property. While hypergraphs are good for modeling artifacts of collaborations, e.g., papers and movies, SCs are more appropriate for succinctly modeling the inherent structure of the collaboration relationship, i.e., who all have collaborated with each other. This distinction is key because subsumptions are very common in real-world collaboration networks.

The facet size distribution constrains the facet degree distribution to an extent. Leveraging this observation, \GeneSCs takes facet size distributions of real world collaboration networks as input and efficiently generates random SCs with facet degree distributions matching those corresponding to the real data. Our theoretical analysis is shown to accurately predict the statistical properties of SCs generated by \GeneSCs in its pure form. For collaborations that have characteristics such as very heavy tails of facet sizes or non-power law popularity distributions of participants as collaborators, appropriate modifications to the preferential attachment step of \GeneSCs (namely, clamping and uniform-hybridization) yields good intuitively sound results.
\appendix
\section{Characterizing the Probability of Facet Subsumption}
\label{appx:subsump}

In this appendix, we consider various situations in which a newcomer facet subsumes one or more facets in the existing SC when following the rules of \GeneSCs. In the following lemmas, we show that the probabilities of such subsumptions become vanishingly small as the SC grows in size. Similar reasoning can be applied for the reverse case, where a newcomer facet is subsumed by an existing facet in the SC.

\begin{lemma}\label{appA1}
The probability of more than one node getting merged with a single existing node in the SC generated thus far becomes vanishingly small as $n, f \to \infty$, where $n$ and $f$ denote the number of nodes and facets in the SC, respectively.
\end{lemma}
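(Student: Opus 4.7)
The plan is to view the merging step for a single incoming facet as a sequence of $m$ independent preferential-attachment (PA) draws and to bound the chance of a ``collision'' (two draws selecting the same existing node) by a birthday-style union bound. When the $(f{+}1)$-th facet of expected size $s$ arrives, $newv \approx 1/c$ of its nodes are introduced fresh, and the remaining $m := s - 1/c$ are merged into the existing SC; each merge independently selects node $i$ with probability $p_i = k_i/(c s n)$ (taking attractiveness $a=0$; the general case is analogous).

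A union bound over the $\binom{m}{2}$ unordered pairs of merges gives
\begin{equation}
\Pr[\text{collision}] \;\le\; \binom{m}{2}\sum_{i=1}^{n} p_i^{\,2} \;\le\; \frac{m^{2}}{2}\cdot\max_i p_i \;=\; \frac{m^{2}}{2}\cdot\frac{k_{\max}}{c s n},
\end{equation}
where $k_{\max} := \max_i k_i$ and we used $\sum_i p_i^2 \le (\max_i p_i)\sum_i p_i = \max_i p_i$. Since $s$ has finite mean (so $m$ is bounded in expectation, with fluctuations handled by conditioning on the drawn facet size), it suffices to prove that $k_{\max}/n \to 0$ as $n, f \to \infty$.

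The second step is to show that $k_{\max}$ grows sublinearly in $n$. I would use a mean-field rate argument on each node's facet degree: using $n(f) = f/c$ and $m = s - 1/c$ on average, the expected increment per incoming facet satisfies
\begin{equation}
\E[k_i(f{+}1) - k_i(f)] \;\le\; m\cdot\frac{k_i(f)}{c s n(f)} \;=\; \Bigl(1-\frac{1}{cs}\Bigr)\frac{k_i(f)}{f}.
\end{equation}
Integrating the continuous-time relaxation $d\E[k_i]/df = (1 - 1/(cs))\,\E[k_i]/f$ from the node's arrival time $f_i$ yields $\E[k_i(f)] = O\bigl((f/f_i)^{\theta}\bigr)$ with $\theta := 1 - 1/(cs) < 1$. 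Taking the maximum over all nodes (equivalently, setting $f_i = 1$) gives $k_{\max} = O(f^{\theta}) = O(n^{\theta}) = o(n)$; a standard concentration argument on each $k_i$ upgrades this expected bound to a with-high-probability statement.

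The main obstacle is avoiding circular reasoning with Theorem~\ref{thm:fdeg}, whose proof invokes this very lemma in order to treat the $m$ PA draws as independent. Crucially, however, the mean-field rate bound above is an \emph{upper} bound on $\E[k_i]$ that does not require collision-freeness: even when two draws happen to coincide, the actual increment of $k_i$ is at most what it would have been without the collision. Thus the sublinear bound on $k_{\max}$ holds unconditionally, and substituting it back into the union bound yields $\Pr[\text{collision}] = O(n^{-1/(cs)}) \to 0$, closing the loop and proving the lemma.
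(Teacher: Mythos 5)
Your proof is correct, and it takes a genuinely more complete route than the paper's. The paper's argument fixes a single existing node of facet degree $k$, models the $m \approx s - \frac{1}{c}$ merge draws as independent Bernoulli trials with success probability $\frac{k}{fs}$, and bounds the per-node collision probability by $O(k^2/f^2)$ via the binomial second term --- essentially the first half of your union bound, but stated only for a fixed node with fixed $k$. You go two steps further. First, you aggregate over all nodes via $\sum_i p_i^2 \le \max_i p_i$, which is the statement actually needed in Theorem~\ref{thm:fdeg} (that \emph{no} node receives two merges); note that a naive aggregation of the paper's per-node bound would run into the divergent second moment of a power law with exponent in $(2,3)$, so your use of $k_{\max}$ rather than $\E[k^2]$ is not just cosmetic. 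Second, you supply the missing ingredient the aggregate bound requires --- the mean-field estimate $k_{\max} = O(f^{1-1/(cs)}) = o(n)$ --- which the paper never establishes, and your exponent is consistent with the degree-growth rate implied by Theorem~\ref{thm:fdeg}. Your handling of the potential circularity is also sound: the rate inequality $\E[\Delta k_i] \le m p_i$ is an upper bound that holds whether or not collisions occur, so the $k_{\max}$ bound does not presuppose the lemma. The only soft spots, which are at the same level of rigor as the paper itself, are the appeal to an unspecified concentration argument to turn the expected bound on $k_{\max}$ into a with-high-probability one, and the implicit assumption that the facet size distribution has a finite second moment so that $m^2$ is controlled.
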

\eat{
The main mechanism of demonstrating power law behavior is expressing balance equations of number of nodes with specific degree~\cite{DMS2000}. At any generation step $t>1$, the degree of a node of degree $k$ mainly varies as follows: (i) Either it stays same, as no edges are connected to this node -- the probability of this event decreases with $k$; (ii) It increases to $k+1$, with probability proportional to $k$; (iii) There is also the probability that it exceeds $k+1$, which means multiple edges are connected to this node. However, it can be shown that the probability of the last event is $O(\frac{k^2}{t^2})$, i.e., vanishingly small, as the generation algorithm proceeds, and has no effect on the recursive equations leading to power law node degree.

Let $p(k,s,t)$ be the probability that node $s$ has facet degree $k$ at time instant $t$.  If at each step, a newly generated facet is attempting to make $m$ connections with the existing simplicial complex, we can express $p(k,s,t)$ recursively in a master equation form as in ~\cite{DMS2000}.

\baq
p(k,s,t+1) & = & \left(1-\frac{k}{2mt}\right)^m p(k,s,t) + \binom{m}{1} \frac{k-1}{2mt} \left(1-\frac{k-1}{2mt}\right)^{m-1} p(k-1,s,t) \\
	& + & \binom{m}{2} \left(\frac{k-2}{2mt}\right)^2 \left(1-\frac{k-2}{2mt}\right)^{m-2} + \cdots \\
	& \approx & \left(1-\frac{k}{2t}\right) p(k,s,t) + \frac{k-1}{2t} \left(1-\frac{(k-1)(m-1)}{2mt}\right) p(k-1,s,t) + O(\frac{p}{t^2})\label{xapp}\\
	& = & \left(1-\frac{k}{2t}\right) p(k,s,t) + \frac{k-1}{2t}  p(k-1,s,t) + O(\frac{p}{t^2})
\eaq
}

\begin{proof}
Assume that at the current facet-addition step $f$ of \GeneSCs (at this step, SC is supposed to have $f$ facets), the newly generated facet of (average) size $s$ is to be merged with the existing SC at $m\approx s-\frac{1}{c}$ nodes. Let $Z_k$ be a random variable denoting the number of merges of an existing node (say, $X$) of facet degree $k$ with one or more distinct nodes $Y_1, Y_2, \ldots$ of the incoming facet. For large enough $f$, the probability that $Y_i$ gets merged into $X$ is $\frac{k}{f s}$ following the Preferential Attachment rule ($X$ has facet degree $k$ and there are $f s$ nodes in the current SC). Assuming $m$ independent Bernoulli trials for merges (this is reasonable in the large network limit as in ~\cite{DMS2000}), the probability of merging $l$ times with $X$ can be expressed as a Binomial random variable. Accordingly, the aggregate probability of having $l$ merges is $\binom{m}{l}(\frac{k}{f s})^l (1-\frac{k}{f s})^{m-l}$.

\begin{align}
P(Z_k > 1)=\sum_{l=2}^m \binom{m}{l}\bigg(\frac{k}{f s}\bigg)^l \bigg(1-\frac{k}{f s}\bigg)^{m-l}
\end{align}
\begin{align}
P(Z_k = 2)&= \frac{m(m-1)}{2} \bigg(\frac{k}{f s}\bigg)^2 \bigg(1-\frac{k}{f s}\bigg)^{m-2}\\
\label{xapp}&<\frac{s(s-1)}{2} \bigg(\frac{k}{f s}\bigg)^2 \bigg(1-\frac{k}{f s}\bigg)^{m-2}\\
&\leq \frac{k^2}{2 f^2} = O\bigg(\frac{k^2}{f^2}\bigg),
\end{align}
where in (\ref{xapp}) we have used the well-known Bernoulli inequality $(1-x)^a \leq 1-xa$ for $x<1$, since $\frac{k}{f s} \rightarrow 0$ as $f \rightarrow \infty$. It can be readily shown that $P(Z_k = l)=O(\frac{k^l}{f^l})$, hence the probability of multiple edges merging to a given node with degree $k$ is $O(\frac{k^2}{f^2})$.
\end{proof}

\vspace{1ex}
We propose the \emph{supernode method} to establish upper bounds on the subsumption probability. In order to investigate whether a given facet is subsumed by the new-coming facet at step $f$, we form a \emph{virtual supernode} which models the nodes of the facet jointly as shown in Figures \ref{fig:Rapp1}-\ref{fig:Rapp2}. More specifically, the facet degree of the supernode is assigned to be the sum of the facet degrees of the individual nodes.

\begin{figure}[ht]
\centering
\includegraphics[width=0.6\columnwidth]{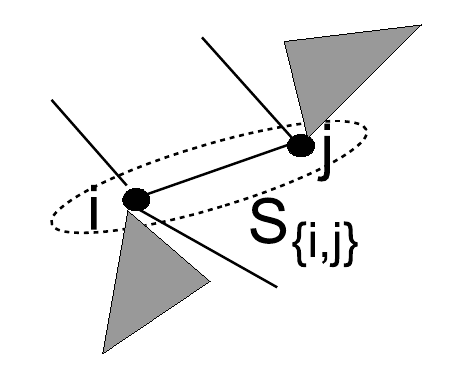}
\caption{Supernode of facet of dimension 2.}\label{fig:Rapp1}
\end{figure}\vspace{-0.1in}
\begin{figure}[ht]
\centering
\includegraphics[width=0.6\columnwidth]{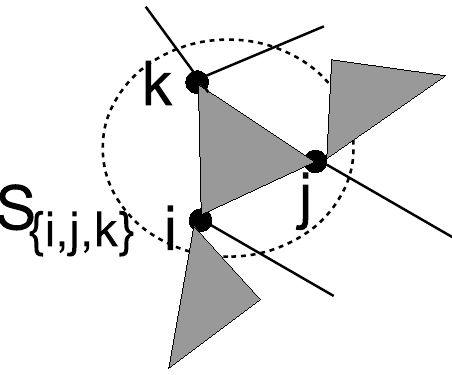}\vspace{-0.1in}
\caption{Supernode of facet of dimension 3.}\label{fig:Rapp2}
\end{figure}\vspace{-0.2in}

\begin{lemma}\label{lem:subs-prob}
For any scenario where the facet dimension is bounded, the probability of subsumption is $O(\frac{1}{f^2})$, and thus negligible in large SCs.
\end{lemma}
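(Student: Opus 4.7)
The plan is to invoke the supernode abstraction introduced immediately above the lemma. Fix an existing facet $F'$ of bounded size $d \ge 2$ with vertices $\{v_1,\ldots,v_d\}$ of facet degrees $k_1,\ldots,k_d$, and let $K = \sum_{i=1}^d k_i$ denote the degree of the corresponding supernode. For the newcomer facet $F_f$ to subsume $F'$, every vertex of $F'$ must be chosen as a merge target among the $m \approx s - \tfrac{1}{c}$ preferential-attachment merges that $F_f$ performs; since $d \ge 2$, a strictly weaker necessary event is that at least two distinct vertices of $F'$ are selected during these merges. The singleton case $d=1$ is qualitatively different and is handled by the subsequent lemmas listed in the theorem statement.

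Next I would compute the relevant probabilities. A single PA merge lands on $v_i$ with probability $\tfrac{k_i}{T}$, where $T = \sum_u k_u = fs$ by the identity recorded in Section~\ref{sec:genescs}. Because Lemma~\ref{appA1} shows that double merges onto the same node contribute only $O(1/f^2)$ corrections, the $m$ merges can be treated as independent Bernoulli trials in the mean-field limit. Hence for any fixed pair $v_i, v_j \in F'$, the probability that both are picked during the arrival of $F_f$ is at most $\binom{m}{2}\cdot 2 \cdot \frac{k_i\, k_j}{T^2} = O(k_i k_j / f^2)$, using that $m = O(1)$ whenever the facet dimension is bounded. A union bound over the $\binom{d}{2}$ pairs inside $F'$ then gives
\[
\Pr[F_f\ \text{subsumes}\ F'] \;\le\; O\!\left(\frac{K^2}{f^2}\right) \;=\; O\!\left(\frac{1}{f^2}\right),
\]
since both $d$ and the supernode degree $K$ are $O(1)$ in the regime under consideration (the typical aggregated facet degree being $d\,c\,s$, a constant). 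Equivalently, viewing $F'$ through its supernode, the probability that the virtual vertex is hit by two or more of the $m$ PA merges is $\binom{m}{2}(K/T)^2 = O(1/f^2)$, which upper-bounds the subsumption event.

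The main obstacle I anticipate is making the mean-field factorization rigorous. The $m$ PA selections within one arrival of $F_f$ are not literally independent, because each selection slightly alters the denominator $T$ and because merges are carried out without replacement on the new facet's side. However, each of these corrections is $O(1/f)$ per step and so cannot worsen the overall $O(1/f^2)$ estimate; invoking Lemma~\ref{appA1} to discard collision terms, exactly as in the classical derivations of power-law PA degree distributions, closes the argument.
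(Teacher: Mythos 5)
Your proof is correct and follows essentially the same route as the paper: both reduce the subsumption event to the strictly weaker event that the supernode of the existing facet, of degree $K=\sum_i k_i$, receives at least two of the $m$ preferential-attachment merges, and then invoke the $O(d_v^2/f^2)$ multiple-merge bound of Lemma~\ref{appA1} to conclude a bound of $O(K^2/f^2)=O(1/f^2)$ for bounded facet dimension (with the same implicit assumption as the paper that the aggregated supernode degree is $O(1)$). The only difference is cosmetic: you spell out a pairwise union bound before passing to the supernode view, whereas the paper applies the supernode reduction directly.
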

\begin{proof}
Consider the probability of a facet being subsumed. For instance, when the facet under consideration is an edge ($i,j$), this means two distinct nodes $s$ and $t$ of the $m$ nodes of the newcomer facet $F_f$ to be merged into the existing network are specifically merged to $i$ and $j$. This can happen as ($s \to i$ and $t \to j$) or ($s \to j$ and $t \to i$).

Now define the supernode $K$ comprised of edge $(i,j)$. Let us consider the probability that $K$ is selected for merging to more than one node of $F_f$ (Fig. \ref{fig:Rapp3}). To analyze this, $K$ can be treated as an ordinary node with degree $d_i+d_j$. \eat{It can be shown that} From Lemma \ref{appA1}, with preferential attachment, the probability of any node $v$ with degree $d_v$ getting multiple merges is $O(\frac{d_v^2}{f^2})$. We readily utilize this result to characterize the probability of $K$ receiving multiple merges as $O(\frac{(d_i+d_j)^2}{f^2})$.
\end{proof}

\begin{figure}[ht]\vspace{-0.1in}
\centering \centering
\includegraphics[width=0.6\columnwidth]{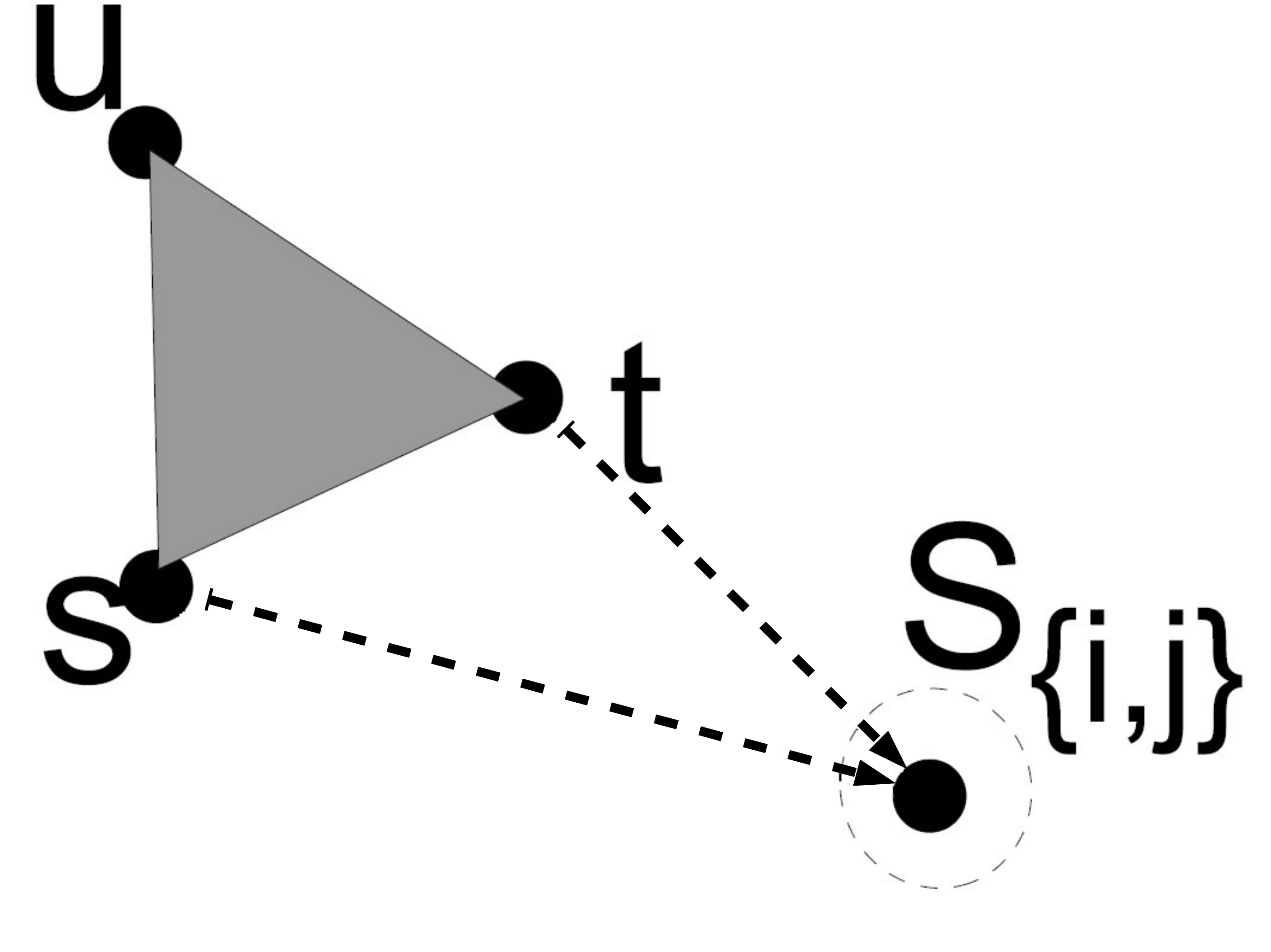}
\caption{A supernode getting multiple merges. Here the newcomer facet $(s,t,u)$ subsumes edge $(i,j)$ in the existing SC.}\label{fig:Rapp3}
\end{figure}\vspace{-0.2in}

\begin{lemma}\label{lem:subs-lessmult}
The probability of facet subsumption is less than the probability of the corresponding supernode getting multiple merges.
\end{lemma}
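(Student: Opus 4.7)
The plan is to prove the claim by a straightforward event containment. Let $F = \{v_1,\ldots,v_k\}$ with $k \ge 2$ be an existing facet that might be subsumed by the newcomer $F_f$, which attempts $m \approx s - 1/c$ merges via preferential attachment. Following Figs.~\ref{fig:Rapp1}--\ref{fig:Rapp2}, I would collapse $v_1,\ldots,v_k$ into a single supernode $K$ of facet degree $\sum_{i=1}^k d_{v_i}$, and declare that a merge is directed to $K$ whenever it is directed to any of $v_1,\ldots,v_k$.

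First I would unpack the subsumption event: $F$ is subsumed by $F_f$ if and only if every $v_i$ is identified with some node of $F_f$ during the merge step. Using Lemma~\ref{appA1} to rule out collisions (distinct nodes of $F_f$ pick distinct existing targets with probability $1 - O(1/f)$), this event forces $k$ distinct merges of $F_f$ to have targets inside $F$. Translating to the supernode picture, this means $K$ accumulates at least $k \ge 2$ merges, so the subsumption event is contained in the event that $K$ receives multiple merges. Monotonicity of probability then yields the inequality.

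The only subtle point, and the one I would actually write down carefully, is verifying that the per-trial probability of ``hitting $K$'' in the supernode picture equals the sum of the per-trial probabilities of hitting $v_1,\ldots,v_k$ in the original picture. This is immediate from the PA rule in line~\ref{line:PA} of \GeneSCs, because hitting $v_i$ in a given trial has probability $d_{v_i}/(f s)$, and these events are disjoint across $i$ within a single trial, so they add. Everything else is combinatorics on events. Singleton facets ($k=1$) are tacitly outside the scope of this lemma, as the supernode reduces to an ordinary node and the bound of Lemma~\ref{appA1} applies directly.
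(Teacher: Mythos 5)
Your proof is correct and follows essentially the same route as the paper: collapse the candidate facet into a supernode whose attachment probability is the sum of its constituents' facet-degree probabilities, observe that subsumption forces at least $k\ge 2$ merges into the supernode, and conclude by event containment (the paper phrases this as only $L!$ of the $L^L$ multiple-merge combinations being genuine subsumptions, which is the same containment made quantitative). Your explicit check that the per-trial hitting probabilities add under the supernode coarsening is a point the paper leaves implicit, but it is not a different argument.
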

\begin{proof} As an example, let us consider the case when the facet under consideration is an edge ($i,j$), which means two distinct nodes $s$ and $t$ of the $m$ nodes of the newcomer facet $F_f$ to be merged into the existing network are specifically merged to $i$ and $j$. This can happen as ($s \to i$ and $t \to j$) or ($s \to j$ and $t \to i$). Now define the supernode $K$ comprised of edge $(i,j)$. Supernode $K$ getting multiple merges can occur in four combinations as shown in Figures \ref{fApp4} and \ref{fApp5}: ($s \to i$ and $t \to j$) or ($s \to j$ and $t \to i$), and both nodes selecting to merge to the same nodes ($s \to i$ and $t \to i$) or ($s \to j$ and $t \to j$). Hence, the set of events corresponding to supernodes getting multiple merges is a superset of the set of events corresponding to actual subsumption of the corresponding facets.

Next, consider (an existing) facet $F^{(L)}$ of dimension $L$, with $m\geq L$ nodes of the newcomer facet $F_f$ being merged to the existing SC. Define the supernode corresponding to $F^{(L)}$ as $K_L$. Now, $L$ distinct nodes of $F_f$ can  merge to the supernode $K_L$ in $L^L$ distinct combinations which lead to a multiple edge merge to $K_L$. On the other hand, only $L!$ of these combinations, i.e., a permutation of the $L$ distinct nodes of $F^{(L)}$ result in an actual subsumption.
\end{proof}

\begin{figure}[ht]
\centering
\includegraphics[width=1.0\columnwidth]{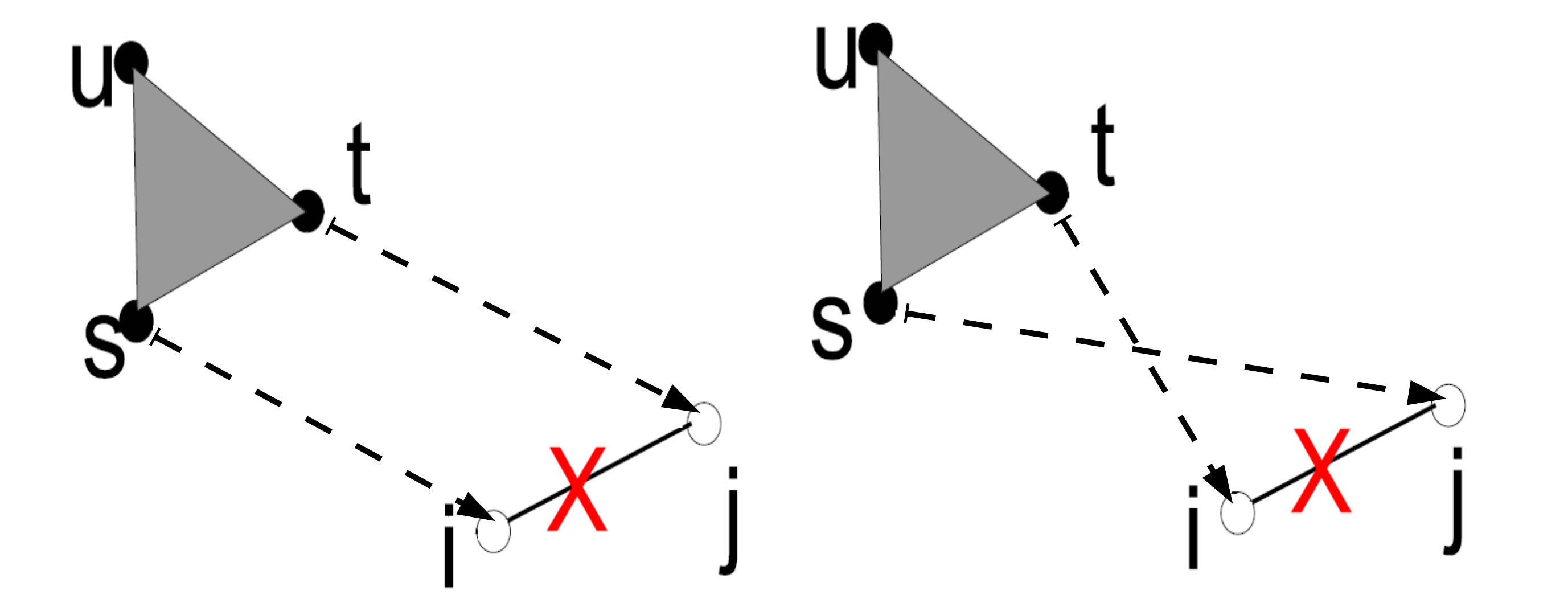}
\vspace{-0.25in}\caption{Multiple supernode merges resulting in subsumption}\label{fApp4}\vspace{-0.3in}
\end{figure}
\begin{figure}[ht]
\centering
\includegraphics[width=1.0\columnwidth]{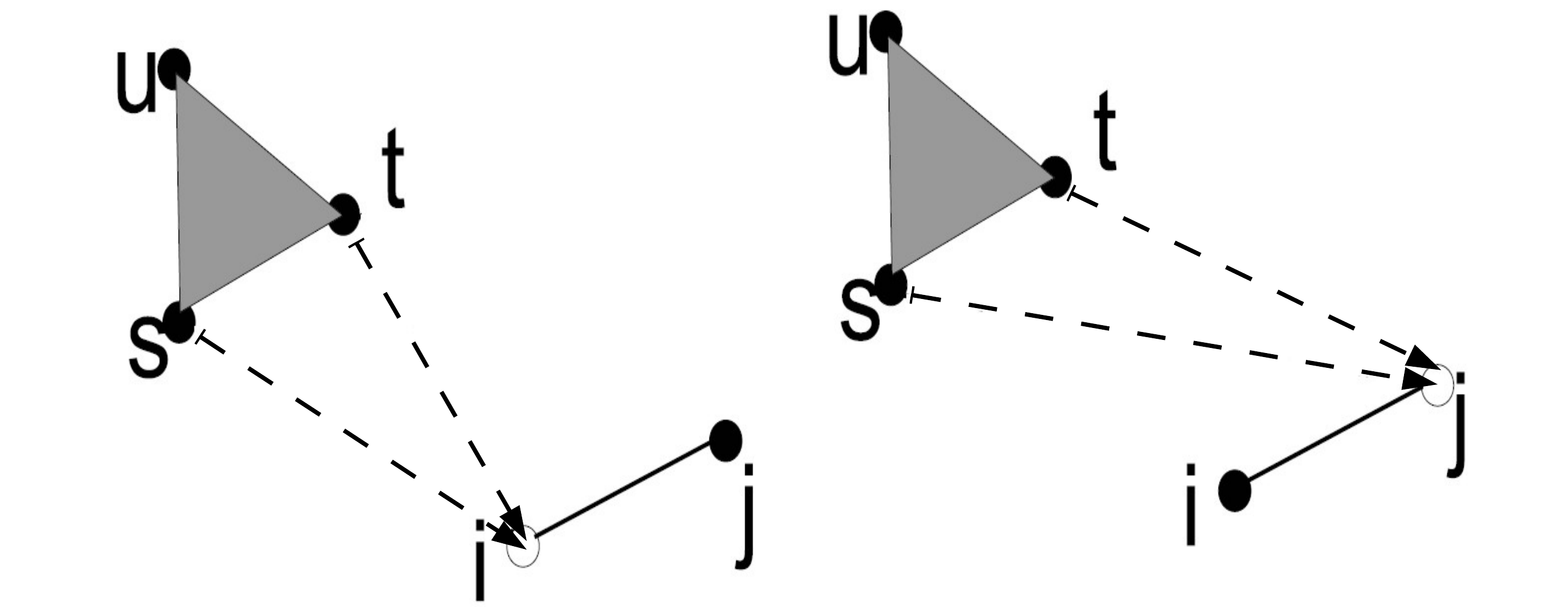}\vspace{-0.1in}
\caption{Multiple supernode virtual connections/merging selections not resulting in subsumption}\label{fApp5}\vspace{-0.1in}
\end{figure}

\begin{lemma}\label{lem:subs-v-size}
The likelihood of subsumption of a facet reduces as facet size increases.
\end{lemma}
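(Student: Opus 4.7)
The plan is to extend the supernode construction used in Lemmas~\ref{lem:subs-prob} and \ref{lem:subs-lessmult} to an arbitrary existing facet $F^{(L)}$ of size $L$ and to show that the resulting upper bound on the subsumption probability is a strictly decreasing function of $L$. Concretely, I would fix $F^{(L)}$ with constituent facet degrees $d_1,\ldots,d_L$, form the supernode $K_L$ of aggregate degree $d_K=\sum_{i=1}^L d_i$, and analyze the event that $K_L$ receives at least $L$ distinct merges from the $m$ newcomer-facet nodes being attached under the preferential attachment rule of \GeneSCs.

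First, I would treat the $m$ merge attempts as asymptotically independent Bernoulli trials with per-trial success probability $d_K/(fs)$ at the supernode, exactly as in the derivation behind Lemma~\ref{appA1}. This gives the crude but adequate upper bound that at least $L$ of the $m$ trials land on $K_L$ with probability at most $\binom{m}{L}\bigl(d_K/(fs)\bigr)^L$. Second, I would invoke the combinatorial observation already recorded in the proof of Lemma~\ref{lem:subs-lessmult}: among the $L^L$ ways in which $L$ chosen newcomer nodes can each be assigned to one of the $L$ constituents of $K_L$, only the $L!$ bijective permutations actually realize a subsumption of $F^{(L)}$. Multiplying by this $L!/L^L$ survival fraction yields the overall bound
\[ P_{\mathrm{subs}}(L)\;\le\;\frac{L!}{L^L}\,\binom{m}{L}\Bigl(\frac{d_K}{fs}\Bigr)^L. \]

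Third -- and this is the substantive monotonicity step -- I would simplify this bound. Using $\binom{m}{L}\le m^L/L!$, the two $L!$ factors cancel and the expression collapses to at most $\bigl(m\,d_K/(L\,f\,s)\bigr)^L$. Bounding $d_K\le L\,d_{\max}$, where $d_{\max}$ is the largest individual facet degree in the current SC, gives $P_{\mathrm{subs}}(L)\le\bigl(m\,d_{\max}/(fs)\bigr)^L$. By Theorem~\ref{thm:fdeg} the facet degree distribution is power law, so $d_{\max}$ grows only sub-linearly in $f$, while $fs$ grows linearly; hence for sufficiently large $f$ the base $m\,d_{\max}/(fs)$ is strictly less than one, and raising it to the power $L$ produces a bound that decreases (indeed, geometrically) in $L$.

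The main obstacle will be making the monotonicity assertion clean despite the fact that the supernode degree $d_K$ itself depends on $L$ and on the particular facet chosen. The linear bound $d_K\le L\,d_{\max}$ is safe but slack; sharpening it would require an averaged estimate using the power-law facet degree distribution from Theorem~\ref{thm:fdeg}, and care is needed because large facets are themselves more likely to contain high-degree nodes. A secondary subtlety is justifying the Bernoulli-independence approximation at the supernode level when $L$ is comparable to $m$; this is handled by noting that $L>m$ renders subsumption impossible outright, so only the regime $L\le m$ is non-trivial, and in that regime the same large-$f$ limit used in Lemma~\ref{appA1} controls the error.
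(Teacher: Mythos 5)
Your proposal is correct and follows essentially the same route as the paper: the supernode construction with aggregate degree $\sum_{i=1}^L d_i$, the observation that subsumption of an $L$-node facet requires at least $L$ merges landing on the supernode (probability $O(d_K^L/f^L)$), and the $L!/L^L$ correction from Lemma~\ref{lem:subs-lessmult}, concluding that the resulting bound decays in $L$ because the base is below one. Your bookkeeping is slightly cleaner than the paper's (you cancel the $L!$ against $\binom{m}{L}$ and bound $d_K\le L\,d_{\max}$ to get an explicit geometric bound, where the paper instead posits $d_S\approx\kappa L$ and notes $\kappa L/f<1$), but this is a refinement of the same argument rather than a different one.
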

\begin{proof}
Note that for any facet with dimension $L$, the equivalent facet degree of the corresponding supernode $F_S$ is given by:
\begin{equation}
d_{S}=\sum_{i=1}^L d_i,
\end{equation}
which naturally implies that the equivalent facet degree of a supernode increases with the dimension of the originating facet. However, the number of multiple merges that is required for the subsumption for a facet of dimension $L$ is greater than or equal to $L$. Accordingly, this probability is $O(\frac{d_S^L}{f^L})$. Even though the supernode facet degree increases with $L$, which suggests a higher likelihood of multiple merges, the multiple merges which can subsume a facet decreases with facet size.

Note that on average, the equivalent facet degree $d_S$ increases linearly with facet dimension $L$. Let us assume that $d_S\approx \kappa L$, where $\kappa$ is a constant. Then, the likelihood of multiple merges to a supernode corresponding to a facet of dimension $L$ can be bounded as $O(\frac{\kappa^LL^L}{f^L})$.

On the other hand, recall from Lemma \ref{lem:subs-lessmult} that for a facet of dimension $L$, only a $\frac{L!}{L^L}$ fraction of the supernode multiple merges correspond to an actual subsumption event. Accordingly, the overall likelihood of subsumption can be approximated as:
\begin{equation}
\frac{\kappa^L L! L^L}{L^L f^L},
\end{equation}
which is maximized for $L=2$ (edge). This approximation is based on the assumption that all multiple edge merge events are of equal probability. In practice, while this is likely not the case, still $O(\frac{\kappa^LL^L}{f^L})$ decreases with increasing $L$, since $\frac{\kappa L}{f}<1$.
\end{proof}

\begin{lemma}\label{lem:subs-gt1}
The probability of more than one facet being subsumed simultaneously is bounded from above by $O(\frac{(d_i+d_j+d_k)^3}{f^3})$, for $(i,j,k)=\arg\max_{i,j,k\in V}(d_i+d_j+d_k)$.
\end{lemma}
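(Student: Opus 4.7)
The plan is to extend the supernode machinery of the preceding lemmas from a single subsumed facet (which costs at least a pair of merges into a 2-node supernode) to two (or more) simultaneously subsumed facets, and identify the smallest ``footprint'' such a double event can occupy in the existing SC.

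First, I would argue that the minimum number of distinct existing nodes that must receive merges from the newcomer facet $F_f$ in order to cause at least two subsumptions is exactly three. Two disjoint subsumed facets would require at least $2+2=4$ merges, yielding an $O(1/f^4)$ probability, which is dominated. The borderline case arises when the two subsumed facets share a vertex; since subsumption probability is largest for edges (Lemma \ref{lem:subs-v-size}), the dominant configuration is two edges $(i,j)$ and $(j,k)$ meeting at a common vertex $j$, so that three distinct nodes $i,j,k \in V$ of the existing SC must each receive a merge from a distinct vertex of $F_f$. Any alternative ``double subsumption'' event (larger facets, three or more edges, edges plus triangles, etc.) requires strictly more merges and thus contributes a lower-order term.

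Next, following the supernode construction of Figures \ref{fig:Rapp1}--\ref{fig:Rapp3}, I would form the virtual supernode $K_{ijk}$ on the three vertices $\{i,j,k\}$, whose effective facet degree is $d_i+d_j+d_k$. By Lemma \ref{lem:subs-lessmult} (applied to $K_{ijk}$ rather than to a single pair), the event ``$F_f$ simultaneously subsumes two facets whose vertex set is contained in $\{i,j,k\}$'' is contained in the event ``$K_{ijk}$ receives at least three merges from $F_f$.'' Extending the Bernoulli calculation of Lemma \ref{appA1} to a node of effective degree $d_S = d_i+d_j+d_k$ and asking for at least three successes out of $m \approx s - 1/c$ independent PA trials with per-trial probability $d_S/(fs)$, the dominant term of the binomial tail is
\begin{equation}
\binom{m}{3}\!\left(\frac{d_S}{fs}\right)^{\!3}\!\!\left(1-\frac{d_S}{fs}\right)^{m-3} \leq \frac{m^3}{6}\cdot\frac{(d_i+d_j+d_k)^3}{(fs)^3} = O\!\left(\frac{(d_i+d_j+d_k)^3}{f^3}\right),
\end{equation}
exactly mirroring the $O(k^2/f^2)$ bound of Lemma \ref{appA1} but with the cubic power coming from the requirement of three distinct merges.

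Finally, I would take a union bound over all triples $(i,j,k)$ of existing nodes that could host the double subsumption, and absorb the combinatorial factor into the worst-case triple by choosing $(i,j,k)=\arg\max_{i,j,k\in V}(d_i+d_j+d_k)$, which yields the stated $O\!\left(\frac{(d_i+d_j+d_k)^3}{f^3}\right)$ bound uniformly. The main obstacle I anticipate is the bookkeeping in the first step: rigorously ruling out ``degenerate'' double-subsumption configurations that might seem to use fewer than three merges (for example, a single incoming facet vertex counted toward two overlapping subsumed facets). The key observation there is that distinct subsumed facets in the existing SC are each maximal, so their union must contain at least three distinct vertices, and each of those vertices must be hit by a distinct merge from $F_f$, which locks in the cubic scaling.
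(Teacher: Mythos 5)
Your proposal is correct and follows essentially the same route as the paper's proof: identify that two simultaneous subsumptions force at least three merges into a three-node supernode, bound that event by the binomial tail $O\!\left(\frac{d_S^3}{f^3}\right)$ with $d_S=d_i+d_j+d_k$, and take the worst-case triple. The only cosmetic difference is that the paper names the empty triangle of three edges as the extremal configuration (noting that two edges sharing a vertex give a supernode of lower degree), whereas you take two edges meeting at a vertex and then maximize over all triples --- both land on the identical bound, and your version is actually the more explicit of the two in carrying out the binomial estimate.
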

\begin{proof}
The minimum number of merges to a
 supernode which results in more than one facet subsumptions is
 equal to 3. This occurs when three edges connect with each other to
 form an empty triangle. Note that two edges sharing a common node
 also necessitates 3 merges by the incoming facet, but its
 supernode would have a lower facet degree. Since the likelihood of
 getting 3 merges is $O(\frac{d_S^3}{f^3})$, this quantity becomes vanishingly small as the SC grows. We also note that it is less likely that an incoming facet will get more than $3$ merges to a supernode.
\end{proof}

\begin{lemma}\label{lem:App6} The probability of $L$ facets being subsumed simultaneously
is upper bounded by the probability of a dimension-$L$ facet being subsumed. \end{lemma}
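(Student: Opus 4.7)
The plan is to reduce the event that $L$ facets are simultaneously subsumed by the newcomer facet $F_f$ to an event about $L$ specific pre-existing nodes of the SC being merged into $F_f$, and then to bound this latter event using exactly the supernode calculation that already underlies the dimension-$L$ facet subsumption probability from Lemmas \ref{lem:subs-prob}--\ref{lem:subs-lessmult}. Since a dimension-$L$ facet subsumption is bounded by $O((d_S/f)^L)$ via its supernode, matching this order for the $L$-fold event suffices.

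First I would let $G_1,\dots,G_L$ denote the simultaneously subsumed facets and set $U=\bigcup_{i=1}^L G_i$. For all $L$ of them to be subsumed, every node in $U$ must be merged into some node of $F_f$. Hence the $L$-fold subsumption event is contained in the event that a single supernode built from $U$ receives $|U|$ merges, which by the supernode machinery of Lemma \ref{lem:subs-lessmult} is already governed by the same $(d/f)^{|U|}$-type bound used elsewhere in this appendix.

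Next I would extract $L$ distinct representatives $v_i\in G_i$, one per subsumed facet. Because the $G_i$ are distinct facets in the SC they form an antichain in the subset lattice, and under the operational assumption (already implicit throughout this appendix) that facets have size at least two, a system of distinct representatives exists by a Hall-type marriage argument, giving $|U|\ge L$. The $L$-fold subsumption event then implies the more restrictive event ``$v_1,\dots,v_L$ are all merged into $F_f$'', whose probability is precisely the upper bound associated, via Lemma \ref{lem:subs-lessmult}, with subsumption of the single dimension-$L$ facet on nodes $\{v_1,\dots,v_L\}$ and supernode facet degree $d_S=\sum_{i=1}^L d_{v_i}$. Chaining these inclusions gives the claim.

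The main obstacle is the distinct-representative step. In pathological antichain configurations $U$ can be strictly smaller than $L$ (for example, the $\binom{m}{2}$ edges of an $m$-clique form an antichain of size $\binom{m}{2}$ whose union has only $m$ nodes), so Hall's condition fails and a naive bound would use $|U|$ rather than $L$ in the exponent. I would handle this by noting that, via Lemma \ref{lem:subs-v-size}, the per-step probability of creating such dense small-union antichain configurations in \GeneSCs itself vanishes as $f\to\infty$, so such configurations can be folded into a lower-order error term and the clean statement of the lemma is recovered for the regime of interest. Making this exceptional-case absorption fully rigorous in the general case is where I expect most of the effort to lie.
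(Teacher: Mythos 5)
Your reduction is the same supernode reduction the paper itself uses, but you aim for more generality than the paper's own proof, and that is exactly where the gap sits. The paper's proof of Lemma \ref{lem:App6} only treats one specific configuration: $L$ facets of dimension $L-1$ arranged as the hollow boundary of a dimension-$L$ simplex (an empty triangle, an empty tetrahedron, and so on). For that configuration the union $U$ of the $L$ facets is exactly the vertex set of a dimension-$L$ facet, so ``all $L$ facets subsumed'' requires the same set of merges as ``this dimension-$L$ facet subsumed,'' and the bound is immediate --- no system of distinct representatives is needed. Your argument covers this case correctly, and with more care than the paper does.

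The problem is the general case you try to include. As you yourself observe, Hall's condition can fail: distinct facets of size at least two form an antichain, but an antichain can be much larger than its ground set (for example, five or six edges on four vertices in a one-dimensional SC), so $|U|<L$ is possible and no system of distinct representatives exists. For such a configuration the lemma as literally stated is \emph{false}, not merely harder to prove: simultaneously subsuming the six edges of a $K_4$ requires only four merges, hence has probability $O(1/f^4)$, which is not dominated by the $O(1/f^6)$ probability of subsuming a single facet with six nodes. Consequently your plan to fold the exceptional configurations into a lower-order error term cannot succeed --- the inequality is a per-configuration claim, and it is precisely in those configurations that it fails; moreover, Lemma \ref{lem:subs-v-size} bounds the probability of subsuming a large facet, not the probability that \GeneSCs ever creates a dense small-union antichain, so it does not supply the rarity statement you invoke. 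The honest resolution is the one the paper implicitly adopts: restrict the claim to collections whose union contains at least $L$ nodes (equivalently, collections admitting a system of distinct representatives, of which the hollow simplex boundary is the extremal example); there your representative argument goes through cleanly and reproduces the paper's bound.
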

\begin{proof} 
Consider $L$ facets of dimension $L-1$ connecting such that they would form a facet of
dimension $L$ except that there is a \emph{hole}; e.g. three edges
connected to form an empty triangle, or four filled triangles
resulting in an empty tetrahedron. The probability of subsumption of
such combined structures of $L$ facets can be analyzed by using the
supernode technique, and can be shown to be negligible for large SCs.
\end{proof}

\begin{remark}\label{rem:App7} The supernode method can be also used for analyzing the probability of subsumption of incoming facets by the existing simplicial complex.
\end{remark}

\vspace{-2ex}
\eat{
\textcolor{red}{(How to derive the power law exponent?)}
By [Dorogovtsev] the power law exponent can be expressed as $\gamma=2+A/m$, where $A$ is a parameter called \emph{initial attractiveness}, and $m$, which is the number of links attached to the existing graph at each step. While we do not introduce extra initial attractiveness to nodes, $A$ may be less than $1$ in our model, since (-we had to normalize by $m$ so divide by mean facet size)  Moreover, $m$ is not a fixed parameter, leading to different power law exponents. Our algorithm differs from traditional preferential attachment approaches is that $m$ is not fixed, but varies depending on the incoming facet size and the $V(s), F(s))$ along with $c$  and $\beta$.
(Can include this in more detail as:
Assume at step t, there is a fairly established equilibrium with F facets and V nodes, i.e.
$F(s)=cV(s)^{\beta}$
Now, at step s+1, assuming there wont be any subsumptions, F(s+1)= F(s)+1. so we are looking for the V(s+1) such that $F(s+1)=c V(s+1)^{\beta}$ is satisfied, with $F(s+1)= F(s)+1$. Note that for any step (including s and s+1) we probably have to do some rounding since it wont be possible to satisfy the relations with integers, but this issue is more of a detail.
Let v be the number of new nodes (not merged to the existing structure). Hence, $V(s+1)=V(s)+v$.

so we have:
\begin{eqnarray}
 F(s+1)=c V(s+1)^{\beta}\\
F(s)+1= c(V(s)+v)^{\beta} \end{eqnarray}
substituting the relation $F(s)=cV(s)^{\beta}$ for step $s$,
\begin{eqnarray}cV(s)^{\beta}+1= c(V(s)+v)^{\beta}\\
V(s)^{\beta}+1/c = (V(s)+v)^{\beta}\\
(V(s)^{\beta}+1/c)^{\frac{1}{\beta}}= V(s)+v\\
=> v= (V(s)^{\beta}+1/c)^{\frac{1}{\beta}}- V(s)
\end{eqnarray}

However, we can use an approximate value for $m$ when $\beta\approx 1$ as $\bar{f_s}-\frac{1}{c}$. Using this approximation to $m$, we can also approximate $\gamma$. (However these exponents turn out to be very close to 2.)
}

\section{Acknowledgments}
Research was sponsored by the Army Research Laboratory and was accomplished under Cooperative Agreement Number W911NF-09-2-0053. The views and conclusions contained in this document are those of the authors and should not be interpreted as representing the official policies, either expressed or implied, of the Army Research Laboratory or the U.S. Government. The U.S. Government is authorized to reproduce and distribute reprints for Government purposes notwithstanding any copyright notation here on. 

We would also like to thank Robert Drost (US Army Research Laboratory) for discussions on speeding up certain computational steps in the \GeneSCs algorithm, and Terrence J. Moore (US Army Research Laboratory) for feedback regarding the hypothetical example demonstrating an extreme case of subsumptions.

This document does not contain technology or technical data controlled under either the U.S. International Traffic in Arms Regulations or the U.S. Export Administration Regulations.


\bibliography{references}
\bibliographystyle{plain}

\end{document}